\pgfplotsset{width=10cm,compat=1.15}
\setlist{nolistsep}
\newtheoremstyle{mystyle}  
  {1pt}                   
  {1pt}                   
  {\itshape}              
  {}                      
  {\bfseries}             
  {.}                     
  {.5em}                  
  {}                      
\theoremstyle{mystyle}
\newtheorem{theo}{Theorem}
\newtheorem{coro}{Corollary}
\titlespacing\section{0pt}{5 pt}{5pt}
\titlespacing\subsection{0pt}{5 pt}{5pt}
\titlespacing\subsubsection{0pt}{5 pt}{5pt}
\newcommand{\ALOOP}[1]{\ALC@it\algorithmicloop\ #1%
	\begin{ALC@loop}}
	\newcommand{\ENDALOOP}{\end{ALC@loop}\ALC@it\algorithmicendloop}
\newcolumntype{F}[1]{%
    >{\raggedright\arraybackslash\hspace{0pt}}p{#1}}%
\newcolumntype{T}[1]{%
    >{\centering\arraybackslash\hspace{0pt}}p{#1}}%
\title{Transmission Benefits and Cost Allocation under Ambiguity}
\author[1]{Han Shu\thanks{hs2226@cornell.edu}}
\author[2]{Jacob Mays\thanks{jacobmays@cornell.edu}}
\affil[1]{Systems Engineering, Cornell University, Ithaca, NY 14853}
\affil[2]{Civil and Environmental Engineering, Cornell University, Ithaca, NY 14853}
\begin{document}
\maketitle
\allowdisplaybreaks



\begin{abstract}
Disputes over cost allocation can present a significant barrier to investment in shared infrastructure. While it may be desirable to allocate cost in a way that corresponds to expected benefits, investments in long-lived projects are made under conditions of substantial uncertainty. In the context of electricity transmission, uncertainty combined with the inherent complexity of power systems analysis prevents the calculation of an estimated distribution of benefits that is agreeable to all participants. To analyze aspects of the cost allocation problem, we construct a model for transmission and generation expansion planning under uncertainty, enabling the identification of transmission investments as well as the calculation of benefits for users of the network. Numerical tests confirm the potential for realized benefits at the participant level to differ significantly from ex ante estimates. Based on the model and numerical tests we discuss several issues, including 1) establishing a valid counterfactual against which to measure benefits, 2) allocating cost to new and incumbent generators vs. solely allocating to loads, 3) calculating benefits at the portfolio vs. the individual project level, 4) identifying losers in a surplus-enhancing transmission expansion, and 5) quantifying the divergence between cost allocation decisions made ex ante and benefits realized ex post. 

\end{abstract}
\textbf{Keywords:} Electricity markets, transmission planning, cost allocation, uncertainty



\doublespacing

\section{Introduction}
A wealth of recent research finds that large-scale expansion of regional and interregional transmission infrastructure in the U.S. would bring economic, reliability, and environmental benefits~\citep{MacDonald2016,Seams2020,Brown2021}. Planned additions to transmission, however, currently fall well short of the level deemed beneficial in models, with studies of deeply decarbonized U.S. systems projecting a need to double or even triple transfer capacity in the coming decades~\citep{denholm2022examining,jenkins2021mission}. One of the major challenges holding up investment is cost allocation: since many stakeholders are likely to benefit from expanded transmission infrastructure, it is difficult to come to a consensus on how to divide project costs among them. While this issue is common to many types of shared infrastructure projects (see, e.g.,~\cite{Hamilton2021}), it may be particularly acute in the case of meshed electricity networks due to the potential for a change in one element to affect power flows across the entire system.

An underlying principle for cost allocation, codified by the Federal Energy Regulatory Commission (FERC) in Order~1000~\citep{FERC1000}, is that transmission costs should be allocated ``in a manner that is at least roughly commensurate with estimated benefits.'' In principle, sufficiently detailed planning models could be used both to establish the net social benefits of a project and to estimate a distribution of those benefits among users, and~\cite{Hogan2018} argues that these models would be the best available basis for determining a reasonable cost allocation. At the same time, both our information and our models fall well short of what would be needed for a precise computation, leading others to question the approach~\citep{BushnellWolak2017}. Judge Richard Cudahy of the U.S. Court of Appeals, dissenting in a case connected to Order~1000, articulates the opposing view as follows: ``The majority has expressed a need for more precise numbers about benefits, burdens and a variety of other aspects. Now it has enhanced that need by suggesting the use of cost-benefit analysis (a method, some think, of dressing up dubious numbers to reach more impressive solutions). I will say preliminarily that I think the majority is under the impression that somehow there is a mathematical solution to this problem, and I think that this is a complete illusion. Despite the frequency with which cost-benefit analysis is used, it does not resolve the difficulty of accurately or meaningfully measuring the costs and benefits involved with these grid strengthening projects. Cost allocation, particularly at these extraordinarily high voltages, is far from a precise science, and there are no mathematical solutions to determining benefits region by region or subregion by subregion''~\citep{FERCsuit2}. This skepticism has perhaps been validated by the emergence since Order~1000 of a wide range of cost allocation methods that have all been determined to meet the ``roughly commensurate'' threshold, leading to inconsistent treatment of similar projects based on the process by which they were approved. Ongoing disputes led FERC to reopen the issue of cost allocation in a 2022 Notice of Proposed Rulemaking~\citep{FERC_TxNOPR}.

This paper addresses several questions connected to the use of planning models in identifying beneficiaries of transmission expansion, calculating their benefits, and allocating cost in a manner consistent with the ``beneficiaries pay'' standard. Despite the challenges,~\cite{Adamson2018} reflects that ``In the absence of an economically and politically acceptable formula, a direct benefits modeling approach as advocated by Hogan may prove the only workable solution.'' The Organization of MISO States (OMS), which includes regulators from a politically diverse set of states in the Midcontinent Independent System Operator (MISO) region, largely endorses the approach in its Statement of Principles for cost allocation~\citep{OMS-principles}. In this context, the paper considers both how existing methods can create conflict by ignoring modeled benefits as well as the factors that may prevent a ``direct benefits modeling'' approach from being workable.

At least three categories of issues affect the computation of benefits and the translation of modeled benefits to a mutually agreeable cost allocation. The first category relates to the models themselves. Transmission planners use a variety of software tools to inform planning, which can be broadly split into 1) security analysis tools, involving detailed power flow models but no economic criteria, 2) production cost tools, which simulate market outcomes with a fixed resource mix, 3) expansion planning tools, which optimize the addition of new generation and transmission resources, and 4) resource adequacy tools, which evaluate the reliability provided by a given resource mix given simulated weather and outage scenarios. In principle, socially optimal transmission decisions could be given by an expansion planning model (number 3 above) that incorporated production cost simulations (2) on a number of scenarios comparable to that used in resource adequacy analysis (4) and including fully specified power flow constraints (1), while also considering the strategic behavior of market participants. The intractability of such a model means that benefit calculations must be performed on separate tools that are simplified along different dimensions, leading to the potential for benefits to be either omitted or duplicated. The second category relates to explicit disagreements on the value of particular benefits. While economic outcomes have a shared measure, benefits related to reliability and public policy are more difficult to quantify. In particular, different jurisdictions in the same regional market often assign different values to carbon reduction and air pollution mitigation. Additionally, some jurisdictions within a market may give more weight to producers of electricity (e.g., to support job creation), while others prioritize minimizing cost to consumers. The third category relates to uncertainty in the input parameters required for planning models (e.g., future demand growth and technology improvements). Even if the benefits could be quantified in a straightforward way, the significant uncertainty inherent to the system means that the ex ante estimates of expected benefits could be very different from the actual benefits seen ex post. Since participants are unlikely to agree on the probability of potential future scenarios, and may even benefit from strategically misrepresenting their views on those probabilities, they are unlikely to agree on estimated benefits.

We primarily address the first and third of these issues, leaving a more comprehensive discussion of the second to future work. Existing model-based approaches for cost allocation can be divided between those assessing benefits based on changes in power flows~\citep{galiana2003transmission,abou2009transmission,yang2015structural,avar2022new} or in prices~\citep{roustaei2014transmission,banez2017beneficiaries,banez2017estimating,Kristiansen2018,Hogan2018,ONeill2020}. While physical approaches are sometimes used in practice, and it is possible that physical usage correlates with economic benefits, the connection is not clear and we take the more direct economic approach. The economics-based models can be further divided between those computing benefits directly~\citep{roustaei2014transmission,Hogan2018,ONeill2020} and those employing concepts from cooperative game theory to address the bargaining power of different participants~\citep{banez2017beneficiaries,banez2017estimating,Kristiansen2018}. Conceding the salience of bargaining power, we pursue the former approach due to its clearer connection to the ``beneficiaries pay'' principle. Among the models analyzed, none explicitly include uncertainty and only~\cite{Kristiansen2018} includes recourse decisions in the form of generation investment. Along these lines, we extend the approach sketched on simple examples in~\cite{ONeill2020} to a stochastic program co-optimizing the expansion of transmission and generation over a long time horizon. While such models have been considered by many researchers~\citep{wogrin2021welfare,Munoz2014,Munoz2015,newlun2021adaptive}, the primary focus in the literature has been identifying high-quality planning solutions rather than investigating the implications for cost allocation. Using our model, we establish beneficiaries and calculate benefits under many possible realizations of uncertainty, providing a more comprehensive understanding of the implications of network expansion for all involved parties. While we do not explicitly model the effect that cost allocation decisions may have on the network expansion decisions themselves, as in~\cite{Bravo2016}, these potential consequences are a theme throughout the discussion.

Through theoretical analysis and a numerical study on a stylized version of the Electric Reliability Council of Texas (ERCOT) system, we discuss five issues:
\begin{enumerate}
    \item How to construct a valid counterfactual against which to measure benefits of a transmission investment. Here, our primary argument is that at a minimum models must include the different generation investments likely to arise in response to different transmission expansion decisions.
    \item When cost should be allocated to generators. While current practice in U.S. systems typically allocates cost to new interconnecting generators but then excludes them from subsequent cost allocation, we conclude that a direct benefits modeling approach would instead allow new generators to connect without cost but then allocate cost to them throughout their life.
    \item Whether to allocate costs on a project-by-project basis or as a portfolio. In the numerical study, allocation at the project level implies that positive cost is allocated to participants with negative net benefits overall; on this basis, we find that portfolio-based allocation is more consistent with the ``beneficiaries pay'' principle.
    \item The potential to compensate market participants who see negative expected benefits from expansion decisions. Here we suggest that the surplus gained from transmission expansion could in principle be used to compensate participants who see negative net benefits, potentially reducing conflicts. 
    \item The potential that participant-level benefits realized ex post will be significantly out of alignment with ex ante estimates. Again with the intent of reducing conflicts, we suggest the possibility of defining financial contracts that would effectively reallocate cost ex post to market participants based on realized benefits.
\end{enumerate}
While the first three are topics of active debate among regulators and thus have near-term policy implications, the last two raise issues for longer-term consideration.

\section{Stochastic Expansion Planning} \label{se:multistage}
As a basis for analyzing the cost allocation problem, we construct a two-stage stochastic program optimizing expansion of generation and transmission over an extended horizon given an agreed-upon set of scenarios. Capacity expansion can be posed either as a social planning problem~\citep{de2008transmission,newlun2021adaptive} or as a multi-agent game in a competitive market setting~\citep{Sauma2006,wogrin2021welfare}. Given the complexity of modeling strategic behavior, system operators at present rely on more straightforward optimization formulations~\citep{Lau2021}. We adopt the same approach, noting that because expansion of transmission tends to weaken the ability of generators to exercise local market power~\citep{Wolak2020}, inclusion of strategic considerations in our model would likely shift our estimates of the distribution of benefits away from generators toward consumers. The first stage of the stochastic program includes decisions for the present year, while the second stage includes decisions to be made in several subsequent years. While the analysis could also be extended to a multistage setting with each year corresponding to a stage, we use a two-stage approximation to ensure scalability in the numerical study.  


The problem is formulated as a mixed-integer programming (MIP) model with transmission line investment decisions as binary variables and generation investment decisions as continuous. Binary variables are needed to represent a key feature of transmission investments, namely, significant economies of scale. Further, it is typically impossible to build a transmission facility with a rating that exactly matches the need, as equipment is available only in a limited number of standardized voltage and power ratings. Transmission investments in the model can be selected from defined levels of expansion with costs reflecting economies of scale. For generation investments, we assume perfect competition and linear costs. These assumptions ensure that, conditional on the transmission network decisions, nodal electricity prices support a resource mix that maximizes long-term social welfare.

Rather than the development of the planning model itself, the primary focus of this study is the translation of the planning model results to cost allocation determinations. Many debates in transmission planning concern the selection of scenarios and benefit\textendash cost thresholds used to justify the investment, as well as the subjective valuation of non-quantified benefits~\citep{Hogan2018}. We set aside these issues, as it is sufficient for the discussion to have a planning tool that recommends transmission investments with positive expected net benefits in sample.  We assume that a stakeholder process is able to construct scenarios and associated probabilities for use in the model, but do not assume that these scenarios are exhaustive, that the selected probabilities are accurate, or that the chosen scenarios and probabilities match the beliefs of individual market participants. 

\subsection{Notation}
\textit{Sets:}
\begin{itemize}
\item[] $y/\mathcal{Y}$: time index (years)
\item[] $n/\mathcal{N}$: nodes in a scenario tree
\item[] $b/\mathcal{B}$ $(\mathcal{B}')$: buses (without reference bus)
\item[] $t/\mathcal{T}$: time blocks
\item[] $l/\mathcal{L}$: lines
\item[] $g/\mathcal{G}$: all generators
\item[] $g/\mathcal{G}_R \subseteq \mathcal{G}$: renewable generators
\item[] $g/\mathcal{G}_T \subseteq \mathcal{G}$: thermal generators
\item[] $q/\mathcal{Q}$: transmission capacity increment options
\item[] $i/\mathcal{I}$: power balance penalty curve segments
\end{itemize} \bigskip

\textit{Parameters:}
\begin{itemize}
\item[] $\zeta_{\delta(n)}$ ($\zeta_{y}$): discount factor of node $n$ (in time index $y$)
\item[] $C^{\text{INV}}_{n,g}$: annualized generation investment cost of generation technology $g$ in node $n$ per unit capacity (\$/MW)
\item[] $C^{\text{INV}}_{l,q}$: annualized transmission investment cost of line $l$ for expansion type $q$ (\$)
\item[] $C^{\text{FIX}}_g$: per unit fixed operation and maintenance cost of generation technology $g$ (\$/MW-yr)
\item[] $C^{\text{VOM}}_g$: per unit variable operation and maintenance cost of generation technology $g$ (\$/MWh)
\item[] $C^{\text{EN}}_{n,g}$: per unit production fuel cost of generation technology $g$ in node $n$ (\$/MWh)
\item[] $\gamma_i^{\text{PB}}$: penalty value of power balance violation in segment $i$ (\$/MWh)
\item[] $\gamma^{\text{LINE}}$: penalty value of transmission line violation (\$/MWh)
\item[] $\gamma^{\text{LOAD}}$: per unit benefit for serving load (\$/MWh)
\item[] $T_t$: duration of time block $t$ (h)
\item[] $CA_{b,g,t}$: capacity availability of generation technology $g$ located at bus $b$ at time $t$ 
\item[] $RPS_{n}$: renewable portfolio standard in node $n$ (\%)
\item[] $D_{n,b,t}$: demand at bus $b$ in time block $t$ in node $n$
\item[] $\Delta L_{q}$: transmission capacity increment $q$
\item[] $SF_{l,b}$: shift factor matrix indexed by $l \in \mathcal{L}, b \in \mathcal{B}'$
\item[] $\phi_n$: the probability of node $n$
\item[] $\overline{Z}_{i}$: maximum MW violation of power balance constraint for segment $i$
\end{itemize} \bigskip

\textit{Variables:}
\begin{itemize}
\item[] $c_{n}^{\text{cap}}$: the capital cost in node $n$ (\$)
\item[] $c_{n}^{\text{op}}$: the operation cost in node $n$ (\$)
\item[] $G_{n,b,g}$: total cumulative generation capacity of generation $g$ at bus $b$ in node $n$ (MW)
\item[] $\Delta G_{n,b,g}$: generation investment in generation technology $g$ at bus $b$ in node $n$ (MW)
\item[] $\Delta \overline{G}_{n,b,g}$: generation retirement of existing generation $g$ at bus $b$ in node $n$ (MW)
\item[] $L_{n,l}$: total cumulative transmission capacity of line $l$ in node $n$ (MW)
\item[] $w_{n,l,q}$: binary variable to decide transmission increment $q$ in line $l$ in node $n$ (MW)
\item[] $p_{n,b,g,t}$: generation dispatch of generation technology $g$ at bus $b$ at time $t$ in node $n$ (MW)
\item[] $z_{n,b,t,i}$: load curtailment segment $i$ at bus $b$ at time $t$ in node $n$ (MW)
\item[] $NI_{n,b,t}$: power net injection at bus $b$ in node $n$ at time $t$ (MW)
\item[] $sl_{n,l,t}$: slack variable for power flow on line $l$ in node $n$ at time $t$ (MW)
\end{itemize} \bigskip

\textit{Dual Variables:}
\begin{itemize}
\item[] $\pi_{n,b,t}$: locational marginal price (LMP) at time block $t$ at bus $b$ in node $n$ (\$/MWh)
\item[] $\theta_{n,b,g,t}$: marginal value of a unit of generation technology $g \in \mathcal{G}$ at time $t$ (\$/MWh)
\item[] $\nu_{n}$: unit price for contributing to the renewable portfolio standard in node $n$ (\$/MWh)
\end{itemize} \bigskip

\textit{Outputs:}
\begin{itemize}
\item[] $\mathcal{U}_b^{load}$: the aggregated load surplus at bus $b$
\item[] $\mathcal{U}_{b,g}^{gen}$: the per unit generation surplus of generation technology $g$ at bus $b$
\item[] $r_b^{load}$: cost allocation ratio of the transmission expansion to bus $b$ (\%)
\item[] $r_{b,g}^{gen}$: cost allocation ratio of the transmission expansion to existing generation $g$ at bus $b$ (\%)
\end{itemize}

\subsection{Formulation}

\begin{figure}[!h]
    \centering
    \includegraphics[width=7cm]{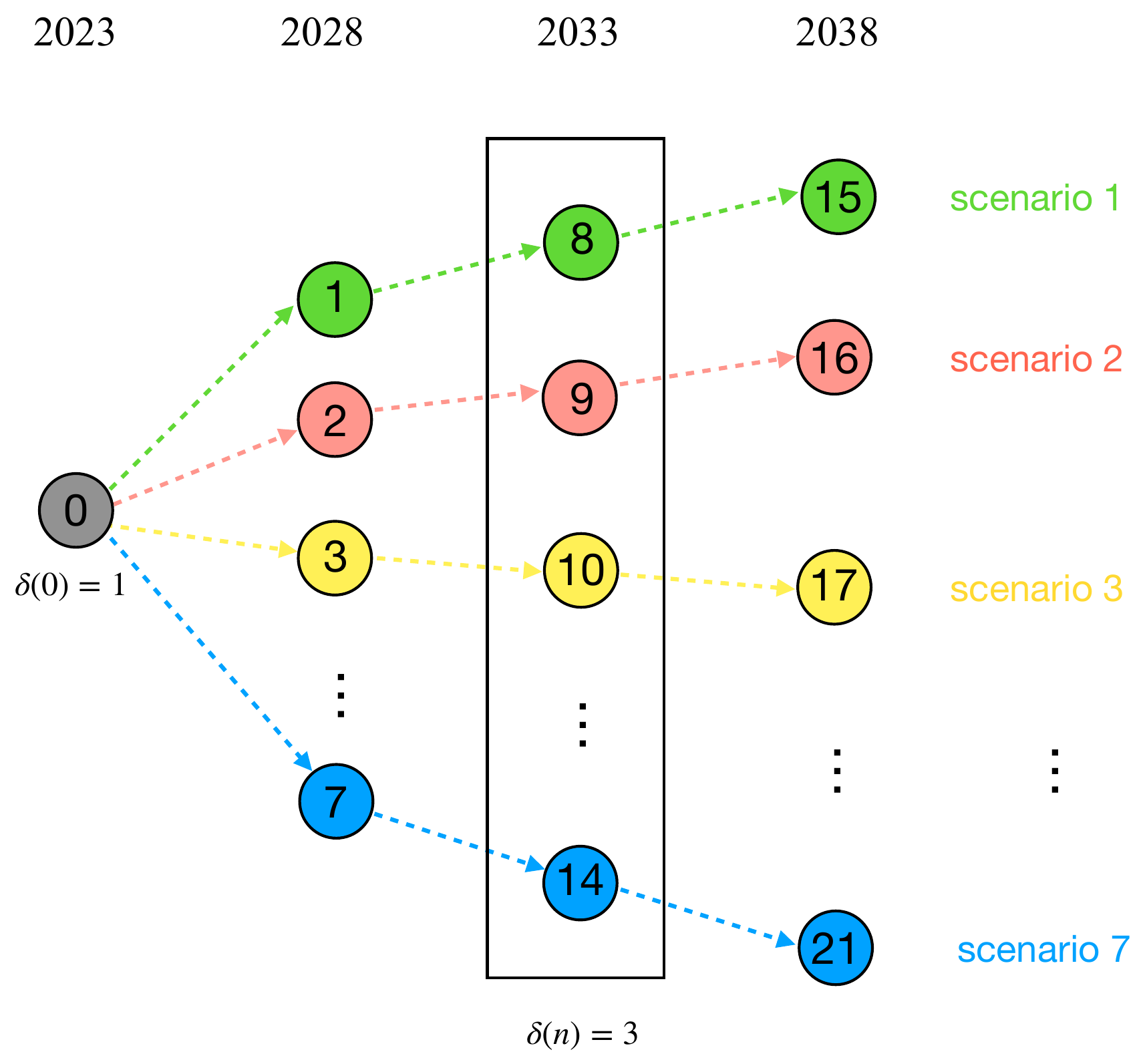}
    \caption{An illustration of a scenario tree with 7 scenarios and $\mathcal{Y} = \{1,2,3,4\}$.}
    \label{fig:scenario}
\end{figure}

We employ a scenario tree with nodes $n \in \mathcal{N}$ to represent the investment trajectory for the two-stage stochastic program. Each node represents a possible state of the world, associated with a set of data. The root node $n=0$ in the first stage represents the current state of the world. The unique predecessor of any node $n \neq 0$ is denoted as $n_{-}$ and the set of predecessors of node $n$ on the path from $n$ to the root node is denoted as $\mathcal{P}(n)$. The depth $\delta (n)$ of node $n$ is the number of nodes on the path to node $0$, with $\delta (0) = 1$. The depth $\delta (n)$ of node $n$ also corresponds to a time index $y \in \mathcal{Y}$. We use $\phi_n$ to represent the probability that the path taken through the scenario tree includes node $n$, with $\sum_{n \in \mathcal{N}: \delta (n) = y} \phi_n = 1 \text{ } \forall y \in \mathcal{Y}$. A visual representation of such a scenario tree with $\mathcal{Y} = \{1,2,3,4\}$ and $7$ scenarios is drawn in Figure~\ref{fig:scenario}. As indicated by the dashed lines, nodes at depth 2 and 3 have a unique successor, reflecting the two-stage simplification previously mentioned. This tree structure mimics the scenario-based planning performed by many system operators, but forces convergence to a single decision in the present year. The focus of the cost allocation discussion will be on transmission investments made in the present year.



In each node $n$, the capital cost includes the transmission and generation investment costs incurred due to the cumulative investment decisions made on the path from node $n$ to the root node, given by 
\begin{align}  \label{eq:capital_cost_node_n}
    c^{\text{cap}}_{n} = & \sum_{n' \in \mathcal{P}(n)} \sum_{l \in \mathcal{L}} \sum_{q \in \mathcal{Q}} C^{\text{INV}}_{l,q} w_{n',l,q} + \sum_{n' \in \mathcal{P}(n)}  \sum_{b \in \mathcal{B}} \sum_{g \in \mathcal{G}} C^{\text{INV}}_{n',g} \Delta G_{n',b,g}.
\end{align}
In other words, investments result in ongoing capital costs throughout the years are covered by the model. This formulation reflects the fact that resources built in the earlier nodes of the model will have completed a larger fraction of their useful lives by the end of the scenario tree. 

At node $n$, the operating cost is 
\begin{align} \label{eq:op_cost_node_n}
    c^{\text{op}}_{n} = & \sum_{b \in \mathcal{B}} \sum_{g \in \mathcal{G}} C^{\text{FIX}}_{g} G_{n,b,g}  +\sum_{b \in \mathcal{B}} \sum_{g \in \mathcal{G}} \sum_{t \in \mathcal{T}} C^{\text{VOM}}_{g} T_t p_{n,b,g,t} + \sum_{b \in \mathcal{B}}  \sum_{t \in \mathcal{T}} \sum_{g \in \mathcal{G}} C^{\text{EN}}_{n,g} T_t p_{n,b,g,t}  \nonumber\\ 
    &+ \sum_{t \in \mathcal{T}} \sum_{i \in \mathcal{I}}\sum_{b \in \mathcal{B}} T_{t} \gamma_i^{\text{PB}} z_{n,b,t,i} + \sum_{t \in \mathcal{T}} \sum_{l \in \mathcal{L}} T_{t} \gamma^{\text{LINE}} sl_{n,l,t},
\end{align}
where the first two terms are the ongoing fixed and variable operation and maintenance costs of generation, the third term is the fuel cost, and the last two terms are penalties for curtailed load and for transmission constraint violations. 

The system planner seeks to maximize the net present value of expected benefits over the assumed scenario tree. The model is formulated as follows:
\begin{subequations}\label{eq:cep}
\begin{alignat}{2}
& \max   & \sum_{n \in \mathcal{N}} \phi_n \zeta_{\delta(n)} \left(
\sum_{t \in \mathcal{T}} \sum_{b \in \mathcal{B}}   T_{t}  \gamma^{\text{LOAD}} D_{n,b,t}  - c^{\text{op}}_{n} - c^{\text{cap}}_{n} \right) \label{eq:cep_obj} \\
& \text{s.t.} 
&L_{n,l} = L_{0,l}+\sum_{n' \in \mathcal{P}(n_{-})} \sum_{q \in \mathcal{Q}} w_{n',l,q}\Delta L_{q} & \quad \forall n \in \mathcal{N} \backslash 0, l \in \mathcal{L} \label{eq:cep_L}\\  
& &G_{n,b,g} = G_{0,b,g}+ \sum_{n' \in \mathcal{P}(n)} \Delta G_{n',b,g} - \sum_{n' \in \mathcal{P}(n)} \Delta \overline{G}_{n
',b,g} & \quad \forall n \in \mathcal{N}, b \in \mathcal{B}, g \in \mathcal{G}  \label{eq:cep_G}\\ 
& (\phi_n \zeta_{\delta(n)}T_t \theta_{n,b,g,t}):  &p_{n,b,g,t} \leq CA_{b,g,t}  G_{n,b,g} & \quad \forall n \in \mathcal{N}, b \in \mathcal{B}, g \in \mathcal{G}, t\in \mathcal{T}  \label{eq:cep_CA}\\
&(\phi_n \zeta_{\delta(n)}\nu_{n}): & \sum_{t \in \mathcal{T}} \sum_{b \in \mathcal{B}} \sum_{g \in \mathcal{G}_R} T_t p_{n,b,g,t} \geq RPS_{n}  \sum_{t \in \mathcal{T}} \sum_{b \in \mathcal{B}}  T_t D_{n,b,t} & \quad \forall n \in \mathcal{N}  \label{eq:cep_RPS} \\
&  (\phi_n \zeta_{\delta(n)}T_t\pi_{n,b,t}):   & NI_{n,b,t} = \sum_{g \in \mathcal{G}} p_{n,b,g,t}  + \sum_{i \in \mathcal{I}}z_{n,b,t,i}-D_{n,b,t} & \quad \forall n \in \mathcal{N}, b \in \mathcal{B}, t\in \mathcal{T}  \label{eq:cep_NI} \\
&    &-(L_{n,l}+sl_{n,l,t})  \leq \sum_{b \in \mathcal{B}'} SF_{l,b} NI_{n,b,t} \leq (L_{n,l}+sl_{n,l,t})  & \quad \forall n \in \mathcal{N}, l \in \mathcal{L}, t\in \mathcal{T}  \label{eq:cep_PF_bound}\\
&    &\sum_{b \in \mathcal{B}}NI_{n,b,t} = 0  & \quad \forall n \in \mathcal{N}, t\in \mathcal{T} \label{eq:cep_power_balance}\\
&    & \Delta G_{n,b,g}, \Delta \overline{G}_{n,b,g}, p_{n,b,g,t} \geq 0  & \quad \forall n \in \mathcal{N}, b\in \mathcal{B}, g \in \mathcal{G}, t \in \mathcal{T} \\
&    &z_{n,b,t,i} \geq 0  & \quad \forall n \in \mathcal{N}, b\in \mathcal{B}, t \in \mathcal{T}, i \in \mathcal{I} \label{eq:cep_load_curtail_seg}\\
&    &\sum_{b \in \mathcal{B}} z_{n,b,t,i} \leq \overline{Z}_{i}  & \quad \forall n \in \mathcal{N}, t \in \mathcal{T}, i \in \mathcal{I} \label{eq:cep_load_curtail_sum}\\
&    & sl_{n,l,t} \geq 0  & \quad \forall n \in \mathcal{N}, l\in \mathcal{L}, t \in \mathcal{T} \label{eq:cep_slack}\\
&    & w_{n,l,q} \in \{0,1\}  & \quad \forall n \in \mathcal{N}, l\in \mathcal{L}, q \in \mathcal{Q}.
\end{alignat}
\end{subequations}
Constraint~\eqref{eq:cep_L} states that the total cumulative transmission capacity is equal to the initial existing transmission capacity plus the sum of the transmission capacity expansion along the path from node~0 to node~$n_{-}$, while constraint~\eqref{eq:cep_G} states that the total cumulative generation capacity is equal to the initial existing generation plus the sum of generation capacity expansion minus generation retirement along the path from node~0 to node~$n$. The delayed in-service date for new transmission relative to new generation is intended to capture the longer development timelines typical for transmission projects. Constraint~\eqref{eq:cep_CA} states that power production is limited by the total installed capacity of a given technology multiplied by its availability in each time block. Constraint~\eqref{eq:cep_RPS} enforces a system-wide renewable portfolio standard (RPS), mandating a percentage of the total amount of power generation coming from renewable energy sources. Constraint~\eqref{eq:cep_NI} calculates the net power injection at bus~$b$, while constraint~\eqref{eq:cep_PF_bound} is a soft constraint limiting power flow on a transmission line. Constraint~\eqref{eq:cep_power_balance} states the sum of the net power injection in the network should be zero. Constraints~\eqref{eq:cep_load_curtail_seg} and~\eqref{eq:cep_load_curtail_sum} state that each load curtailment segment is non-negative and the sum of load curtailment segment cannot exceed the maximum MW violation of that segment. After fixing binary variables $w$, we can query the dual variables of the constraints in the resulting linear program. Dual variables are scaled in order to produced unscaled prices and inframarginal rents. The dual variable $\theta_{n,b,g,t}$ of constraint~\eqref{eq:cep_CA} can be interpreted as the marginal value of capacity of generation technology $g$ at bus $b$ in time block $t$. The dual variable $\pi_{n,b,t}$ of constraint~\eqref{eq:cep_NI} is the locational marginal price (LMP). For completeness we define the linear program using the optimal values $w_{n,l,q}^*$ found when solving model~\eqref{eq:cep} as follows:
\begin{subequations} \label{eq:fixed}
\begin{alignat}{2}
& \max \quad & \eqref{eq:cep_obj} \notag \\
& \text{s.t.} \quad & \eqref{eq:cep_L} \text{\textendash} \eqref{eq:cep_slack} \notag \\
& & \quad w_{n,l,q} = w_{n,l,q}^* & \quad \forall n \in \mathcal{N}, l\in \mathcal{L}, q \in \mathcal{Q}.
\end{alignat}
\end{subequations}

\section{Establishing Beneficiaries} \label{se:beneficiaries}
Supposing that system planners use model~\eqref{eq:cep} to identify transmission expansion decisions, this section addresses the question of how to define beneficiaries, as well as the challenges that arise even when all parties agree on the formulation and scenarios used in the model. 

\subsection{Establishing a Counterfactual} \label{sec:counter}
To measure the benefits brought by a certain transmission project, we first need to define a counterfactual against which benefits will be measured. Establishing a counterfactual to the construction of a particular transmission investment is complicated by the fact that subsequent transmission and generation investment, as well as operations, will change as a result of the investment under study. Some cost allocation schemes currently used in U.S. systems, especially those for investments motivated by reliability violations rather than economic efficiency, fail to establish a valid counterfactual because the models omit the possibility of operational changes or compensatory investments. As discussed in~\cite{Mays2023}, the absence of a valid counterfactual is particularly clear in the case of interconnecting new generators.

After solving model~\eqref{eq:cep} and determining expansion decisions for the present year, there are at least three ways that a counterfactual might be established. In each case, re-solving model~\eqref{eq:cep} with additional constraints leads to an alternate solution with a higher objective function value. We define three options as follows:
\begin{enumerate}
    \item Exclude the specific transmission investment and fix all other transmission and generation investments; benefits reflect the difference in operating cost between the solutions.
    \item Exclude the specific transmission investment, fix all other transmission investments, and allow generation investments to optimally readjust to the counterfactual network; benefits reflect the difference in investment and operating cost between the solutions.
    \item Exclude the specific transmission investment (at all levels $q\in\mathcal{Q}$ and for either all years $y\in\mathcal{Y}$ or just the present), but allow freedom in both generation and other transmission investments; benefits reflect the difference in investment and operating cost between the solutions. 
\end{enumerate}
The primary issue with the first option is that it is unrealistic and unnecessarily restrictive. Excluding the transmission investment without allowing any compensatory investments could lead to a situation with unsolvable reliability violations, leading either to an infeasible model or large costs driven by penalty parameters. The primary issue with the third option is that in order to determine participant-level benefits for the projects of interest, cost allocation determinations also need to be made for the counterfactual transmission projects. Since these allocations would in turn be determined against a similarly defined counterfactual, allowing these alternatives introduces a recursive aspect to the problem. Since allocations based on the first are guaranteed to be inaccurate and those based on the third would be impractical, we suggest that analysis should pursue the second option. Since investment in generation (as well as storage and distributed resources) is often exogenous or excluded from current models, we note the contrast between our recommendation and the claim in~\cite{Hogan2018} that the information needed for cost allocation is already available in current planning models. 

Putting this suggestion into practice could be challenging, especially in the case of upgrades prompted by reliability violations not observable in the linear approximations to the power flow equations typically used in capacity expansion models. At the expense of additional complexity, more complicated constraints could in principle be brought into model~\eqref{eq:cep}, making the construction of a valid counterfactual more straightforward. In practice, it is more common in such cases to skip the step of establishing a counterfactual altogether, instead socializing the cost of related upgrades or relying on power flow analyses with unclear connection to economic benefits. Even if an optimization model is not used, however, a better approach to assess benefits would be specifying a plausible alternative to resolve the identified reliability violations and measuring cost against this alternative. Noting the challenge, for the remainder of this paper we assume that benefits associated with the level of reliability are captured through penalties on power balance violations in model~\eqref{eq:cep}.


We can formalize the construction of a counterfactual as follows. Suppose we are interested in allocating the cost of one or more transmission investments represented by a subset $\mathcal{W}^{INV}\subset \mathcal{W} = \mathcal{N}\times\mathcal{L}\times\mathcal{Q}$ of the binary variables $w_{n,l,q}$, where we assume that the investments of interest occur at node $n=0$. Then counterfactual generation investments, along with counterfactual prices and production quantities, can be found by solving
\begin{subequations} \label{eq:counter}
\begin{alignat}{2}
& \max \quad & \eqref{eq:cep_obj} \notag \\
& \text{s.t.} \quad & \eqref{eq:cep_L} \text{\textendash} \eqref{eq:cep_slack} \notag \\
& & \quad w_{n,l,q} = 0 & \quad \forall (n,l,q) \in \mathcal{W}^{INV} \\
& & \quad w_{n,l,q} = w_{n,l,q}^* & \quad \forall (n,l,q) \in \mathcal{W} \setminus \mathcal{W}^{INV}.
\end{alignat}
\end{subequations}

\subsection{Generation} \label{sec:gen-benefits}
We first consider the potential for generators to benefit from transmission expansion. An important distinction is between new and existing generators. At present, most U.S. systems follow an ``invest and connect'' approach in which the cost of network upgrades identified in interconnection studies is allocated to new generators. ERCOT, by contrast, uses a ``connect and manage'' approach that eschews network upgrades in the interconnection process but does not make any guarantees on the deliverability of energy from the interconnecting project. At present, no U.S. system allocates cost for subsequent network upgrades to generators after they have completed the interconnection process. The primary point of this subsection is to show that, in general, the direct benefits modeling approach pursued in this paper supports the ``connect and manage'' approach of allowing new generators to join the system without cost, but also supports the allocation of cost to existing generators on an ongoing basis.

Evaluated at node~0, the discounted operating profit expected by a unit of generation of type $g$ at bus $b$ can be calculated as
\begin{align} 
\label{eq:unit_op_profit}
    \mathbb{E}(\mathcal{U}_{b,g}^{gen}) &= \sum_{n \in \mathcal{N}} \zeta_{\delta (n)} \phi_n \left(\sum_{t \in \mathcal{T}} T_t  \left(\pi_{n,b,t} -  C^{\text{EN}}_{n,g}- C^{\text{VOM}}_{g} + \nu_n \mathbbm{1}_{\{g\in \mathcal{G}_R\}} \right) \frac{p_{n,b,g,t}}{G_{n,b,g}} - C^{\text{FIX}}_{g}  \right),
    \end{align}
where $\mathbbm{1}_{\{g \in \mathcal{G}_R\}} = 1$ if the generator can sell renewable energy credits and $0$ otherwise. 

With $\mathbb{E}(\mathcal{U^*}_{b,g}^{gen})$ indicating expected benefits assuming the socially optimal transmission configuration and $\mathbb{E}(\mathcal{U'}_{b,g}^{gen})$ indicating expected benefits with the counterfactual transmission configuration, the per unit expected benefit for generation of type $g$ located at bus $b$ from transmission expansion can then be calculated as the difference in expected operating profits:
\begin{align} \label{eq:gen_benefits}
     \mathbb{E}(\Delta \mathcal{U}_{b,g}^{gen}) &= \mathbb{E}(\mathcal{U^*}_{b,g}^{gen})  - \mathbb{E}(\mathcal{U'}_{b,g}^{gen}).
\end{align}

\subsubsection{Existing generators}
We first consider the case of existing generators, which can more clearly benefit or be harmed by transmission expansion. The presence of new generation in either the socially optimal or the counterfactual case can indicate how existing generators of the same type and located at the same bus are affected by the expansion. We state three cases formally as Theorem~\eqref{theo:gen-nogain} and Corollaries~\eqref{prop:gen-gain} and~\eqref{prop:gen-loss}.
\begin{theo} \label{theo:gen-nogain}
 Suppose new generation of type $g$ is constructed at bus $b$ in both the expansion scenario, i.e., $\Delta G^*_{0,b,g} > 0$, and the counterfactual scenario, i.e., $\Delta G'_{0,b,g} > 0$. Then existing generation of that type at that bus neither benefits nor suffers losses from the expansion, i.e., $\mathbb{E}(\Delta \mathcal{U}_{b,g}^{gen}) = 0$.
\end{theo}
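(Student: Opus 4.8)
The plan is to exploit that, once the binary decisions $w$ are fixed, both the socially optimal program \eqref{eq:fixed} and the counterfactual program \eqref{eq:counter} are linear programs, so the scaled prices $\pi$, $\theta$, $\nu$ appearing in \eqref{eq:unit_op_profit} are precisely Lagrange multipliers satisfying the KKT conditions together with the primal optimum. The heart of the argument is a free-entry / zero-economic-profit condition: I would use the stationarity condition for the present-year investment variable $\Delta G_{0,b,g}$ to show that, whenever $\Delta G_{0,b,g}>0$, the per-unit discounted expected operating profit $\mathbb{E}(\mathcal{U}_{b,g}^{gen})$ of \eqref{eq:unit_op_profit} equals a quantity built only from data, namely $C^{\text{INV}}_{0,g}\sum_{n\in\mathcal{N}}\phi_n\zeta_{\delta(n)}=C^{\text{INV}}_{0,g}\sum_{y\in\mathcal{Y}}\zeta_y$. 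Since this value is identical whether one solves with the optimal or the counterfactual transmission configuration (only $w$ changes, while $C^{\text{INV}}_{0,g}$, $\phi_n$, $\zeta_{\delta(n)}$ are fixed), the two sides of \eqref{eq:gen_benefits} coincide, so $\mathbb{E}(\Delta\mathcal{U}_{b,g}^{gen})=0$.

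Concretely, I would carry out the following steps. (i) Write the Lagrangian of \eqref{eq:fixed}, reading the multipliers of \eqref{eq:cep_CA}, \eqref{eq:cep_RPS}, \eqref{eq:cep_NI} in the scaled form $\phi_n\zeta_{\delta(n)}T_t\theta_{n,b,g,t}$, $\phi_n\zeta_{\delta(n)}\nu_n$, $\phi_n\zeta_{\delta(n)}T_t\pi_{n,b,t}$ as the paper prescribes. (ii) From stationarity in $p_{n,b,g,t}$, after cancelling the common positive factor $\phi_n\zeta_{\delta(n)}T_t$, obtain that the margin $\pi_{n,b,t}-C^{\text{EN}}_{n,g}-C^{\text{VOM}}_g+\nu_n\mathbbm{1}_{\{g\in\mathcal{G}_R\}}$ equals $\theta_{n,b,g,t}$ whenever $p_{n,b,g,t}>0$; combined with complementary slackness for \eqref{eq:cep_CA} ($\theta_{n,b,g,t}>0\Rightarrow p_{n,b,g,t}=CA_{b,g,t}G_{n,b,g}$) this yields
\[
\left(\pi_{n,b,t}-C^{\text{EN}}_{n,g}-C^{\text{VOM}}_g+\nu_n\mathbbm{1}_{\{g\in\mathcal{G}_R\}}\right)p_{n,b,g,t}=\theta_{n,b,g,t}\,CA_{b,g,t}\,G_{n,b,g}
\]
for every $n$ and $t$. (iii) Substitute this into \eqref{eq:unit_op_profit}, so each term $(\cdots)\,p_{n,b,g,t}/G_{n,b,g}$ collapses to $\theta_{n,b,g,t}CA_{b,g,t}$, leaving $\mathbb{E}(\mathcal{U}_{b,g}^{gen})=\sum_{n}\zeta_{\delta(n)}\phi_n\big(\sum_t T_t\theta_{n,b,g,t}CA_{b,g,t}-C^{\text{FIX}}_g\big)$. (iv) Write stationarity in $\Delta G_{0,b,g}$: this variable enters the discounted objective only through the annualized capital-cost term of \eqref{eq:capital_cost_node_n} (coefficient $-C^{\text{INV}}_{0,g}\sum_{n}\phi_n\zeta_{\delta(n)}$, since node $0$ lies on every path) and, through the cumulative-capacity equation \eqref{eq:cep_G}, through $G_{n,b,g}$ in the fixed-O\&M term of \eqref{eq:op_cost_node_n} and in the capacity limit \eqref{eq:cep_CA} at every node; when $\Delta G_{0,b,g}>0$, complementary slackness forces equality, which rearranges to $\sum_n\phi_n\zeta_{\delta(n)}\sum_t T_t\theta_{n,b,g,t}CA_{b,g,t}=(C^{\text{INV}}_{0,g}+C^{\text{FIX}}_g)\sum_n\phi_n\zeta_{\delta(n)}$. (v) Combine (iii) and (iv) to get $\mathbb{E}(\mathcal{U}_{b,g}^{gen})=C^{\text{INV}}_{0,g}\sum_n\phi_n\zeta_{\delta(n)}$, use $\sum_{n:\delta(n)=y}\phi_n=1$ to rewrite this as $C^{\text{INV}}_{0,g}\sum_{y}\zeta_y$, and finally apply the hypotheses $\Delta G^*_{0,b,g}>0$ and $\Delta G'_{0,b,g}>0$ to conclude $\mathbb{E}(\mathcal{U^*}_{b,g}^{gen})=\mathbb{E}(\mathcal{U'}_{b,g}^{gen})$, hence $\mathbb{E}(\Delta\mathcal{U}_{b,g}^{gen})=0$.

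The main obstacle is careful bookkeeping rather than any deep difficulty: getting the signs right for the inequality constraints in a maximization, carrying the scaling factors $\phi_n\zeta_{\delta(n)}T_t$ (and the analogous ones) consistently so that the ``prices'' in \eqref{eq:unit_op_profit} are exactly the true multipliers, and correctly tracking the fact that the single present-year variable $\Delta G_{0,b,g}$ appears in the capacity constraint \eqref{eq:cep_CA} and the fixed-cost term at every node of the scenario tree through \eqref{eq:cep_G}. A minor technical point is the degenerate case $G_{n,b,g}=0$ at some node, where $p_{n,b,g,t}/G_{n,b,g}$ in \eqref{eq:unit_op_profit} is nominally $0/0$: there \eqref{eq:cep_CA} forces $p_{n,b,g,t}=0$, both sides of the identity in step (ii) vanish, and the term is read as $0$ (equivalently, $\Delta G_{0,b,g}>0$ together with the model's retirement bounds keeps $G_{n,b,g}>0$ throughout the subtree below node $0$). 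No uniqueness of the dual solution is needed, since every relation used holds for an arbitrary optimal primal--dual pair.
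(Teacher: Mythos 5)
Your proposal is correct and follows essentially the same route as the paper's proof: stationarity in $p_{n,b,g,t}$ plus complementary slackness on~\eqref{eq:cep_CA} to collapse the operating margin to $\sum_t T_t\theta_{n,b,g,t}CA_{b,g,t}$, then the zero-profit condition from stationarity in $\Delta G_{0,b,g}$ (aggregated over the whole tree since node~0 lies on every path) to pin $\mathbb{E}(\mathcal{U}_{b,g}^{gen})$ at the data-only value $C^{\text{INV}}_{0,g}\sum_n\phi_n\zeta_{\delta(n)}$ in both the expansion and counterfactual programs. Your explicit handling of the degenerate $G_{n,b,g}=0$ case and the remark that no dual uniqueness is needed are small refinements the paper leaves implicit, but the argument is the same.
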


\begin{proof}
For model~\eqref{eq:fixed}, the KKT conditions on $p_{n,b,g,t}$ are 
\begin{align}
0 \leq p_{n,b,g,t} \perp C^{\text{EN}}_{n,g} + C^{\text{VOM}}_{g}  + \theta_{n,b,g,t}
    - \pi_{n,b,t} - \nu_n \mathbbm{1}_{\{g \in \mathcal{G}_R\}} \geq 0 \qquad \forall n \in \mathcal{N}, b \in \mathcal{B}, g \in \mathcal{G}, t \in \mathcal{T}.  \label{eq:KKT_gen_prod} 
\end{align}

By the complementarity condition, if $p_{n,b,g,t}>0$, we have $\theta_{n,b,g,t} = \pi_{n,b,t} - C^{\text{EN}}_{n,g} - C^{\text{VOM}}_{g} + \nu_n \mathbbm{1}_{\{g \in \mathcal{G}_R\}} $. 

Then the discounted operating profit~\eqref{eq:unit_op_profit} can be written as
\begin{align}
    \mathbb{E}(\mathcal{U}_{b,g}^{gen})
    &= \sum_{n \in \mathcal{N}} \zeta_{\delta (n)} \phi_n \left(\sum_{t \in \mathcal{T}} T_t  \theta_{n,b,t} \frac{p_{n,b,g,t}}{G_{n,b,g}} - C^{\text{FIX}}_{g}  \right).
\end{align}

By complementary slackness, when $\theta_{n,b,t} >0$, $\frac{p_{n,b,g,t}}{G_{n,b,g}} = CA_{b,g,t}$ holds. When $\theta_{n,b,t} = 0$, replacing $\frac{p_{n,b,g,t}}{G_{n,b,g}}$ with $CA_{b,g,t}$ would not affect the result. After replacement, the discounted operating profit~\eqref{eq:unit_op_profit} becomes
\begin{align}
    \mathbb{E}(\mathcal{U}_{b,g}^{gen})= \sum_{n \in \mathcal{N}} \zeta_{\delta (n)} \phi_n \left(\sum_{t \in \mathcal{T}} T_t  \theta_{n,b,t} CA_{b,g,t} - C^{\text{FIX}}_{g}  \right).
\end{align}

For both models~\eqref{eq:fixed} and~\eqref{eq:counter}, the objective function and variable $G_{n,b,g}$ are defined to include summation over the path $\mathcal{P}(n)$. Given that node $0$ is on the path of every node to the root node in the scenario tree, it follows that the KKT condition on $\Delta G_{0,b,g}$ would aggregate over all nodes within the tree, given by
\begin{align} \label{eq:KKT_inv}
    0 \leq \Delta G_{0,b,g}  \perp \sum_{n \in \mathcal{N}}  \zeta_{\delta (n)} \phi_n \left( C^{\text{INV}}_{0,g} + C^{\text{FIX}}_g - \sum_{t \in \mathcal{T}}T_t CA_{b,g,t} \theta_{n,b,g,t}\right)  \geq 0 \qquad \forall g \in \mathcal{G}, b \in \mathcal{B}. 
\end{align}

By complementary slackness, $\Delta G_{0,b,g} > 0$ implies $$ \sum_{n \in \mathcal{N}}  \zeta_{\delta (n)} \phi_n \left( C^{\text{INV}}_{0,g} + C^{\text{FIX}}_g - \sum_{t \in \mathcal{T}}T_t CA_{b,g,t} \theta_{n,b,g,t}\right) = 0.$$ 
When new generation of type $g$ is constructed at bus $b$ in the both expansion scenario and counterfactual scenario, i.e., $\Delta G^*_{0,b,g} > 0$ and $\Delta G'_{0,b,g} > 0$, we have
$\mathbb{E}(\mathcal{U^*}_{b,g}^{gen}) = \mathbb{E}(\mathcal{U'}_{b,g}^{gen}) = \sum_{n \in \mathcal{N}}  \zeta_{\delta (n)} \phi_n C_{0,g}^{\text{INV}}$. By the definition of benefits in~\eqref{eq:gen_benefits}, this leads to $\mathbb{E}(\Delta \mathcal{U}_{b,g}^{gen}) = 0$.

\end{proof}

\begin{coro} \label{prop:gen-gain}
 Suppose new generation of type $g$ is constructed at bus $b$ in the expansion scenario, i.e., $\Delta G^*_{0,b,g} > 0$, but not in the counterfactual scenario, i.e., $\Delta G'_{0,b,g} = 0$. Then existing generation of that type at that bus benefits from the expansion, i.e., $\mathbb{E}(\Delta \mathcal{U}_{b,g}^{gen}) > 0$.
\end{coro}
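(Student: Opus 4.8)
The plan is to rerun the argument from the proof of Theorem~\ref{theo:gen-nogain} on each of the two scenarios separately, exploiting the asymmetry that $\Delta G^*_{0,b,g}>0$ produces an \emph{exact} identity for $\mathbb{E}(\mathcal{U^*}_{b,g}^{gen})$ whereas $\Delta G'_{0,b,g}=0$ produces only a one-sided \emph{inequality} for $\mathbb{E}(\mathcal{U'}_{b,g}^{gen})$ — and to check that this inequality points the right way.

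Concretely, I would first reuse the two complementary-slackness reductions already established for model~\eqref{eq:fixed}, which are equally valid for model~\eqref{eq:counter}: dispatch stationarity~\eqref{eq:KKT_gen_prod} lets one substitute $\theta_{n,b,g,t}$ for $\pi_{n,b,t}-C^{\text{EN}}_{n,g}-C^{\text{VOM}}_g+\nu_n\mathbbm{1}_{\{g\in\mathcal{G}_R\}}$ on the support of $p_{n,b,g,t}$, and the capacity constraint~\eqref{eq:cep_CA} lets one substitute $CA_{b,g,t}$ for $p_{n,b,g,t}/G_{n,b,g}$ throughout, so that in either scenario
\[
\mathbb{E}(\mathcal{U}_{b,g}^{gen}) \;=\; \sum_{n\in\mathcal{N}} \zeta_{\delta(n)}\phi_n\Big(\sum_{t\in\mathcal{T}} T_t\,\theta_{n,b,g,t}\,CA_{b,g,t} - C^{\text{FIX}}_g\Big),
\]
which is well defined because existing type-$g$ capacity is present at bus $b$, i.e.\ $G_{n,b,g}>0$. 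I would then invoke the investment stationarity condition~\eqref{eq:KKT_inv}. In the expansion scenario $\Delta G^*_{0,b,g}>0$ forces its bracketed term to vanish, giving $\mathbb{E}(\mathcal{U^*}_{b,g}^{gen})=\sum_{n\in\mathcal{N}}\zeta_{\delta(n)}\phi_n C^{\text{INV}}_{0,g}$ exactly as in Theorem~\ref{theo:gen-nogain}. In the counterfactual scenario $\Delta G'_{0,b,g}=0$ only forces the bracketed term of~\eqref{eq:KKT_inv} to be nonnegative, which rearranges to $\mathbb{E}(\mathcal{U'}_{b,g}^{gen})\le\sum_{n\in\mathcal{N}}\zeta_{\delta(n)}\phi_n C^{\text{INV}}_{0,g}$. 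Subtracting yields $\mathbb{E}(\Delta\mathcal{U}_{b,g}^{gen})=\mathbb{E}(\mathcal{U^*}_{b,g}^{gen})-\mathbb{E}(\mathcal{U'}_{b,g}^{gen})\ge 0$.

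The one step I expect to be delicate is upgrading $\ge$ to the strict $>$ in the statement. The bound above is tight precisely when the reduced cost of $\Delta G_{0,b,g}$ in the counterfactual program is zero — that is, when entry of type $g$ at bus $b$ would be exactly break-even there — in which case there is an alternative counterfactual optimum with $\Delta G'_{0,b,g}>0$ to which Theorem~\ref{theo:gen-nogain} applies and gives zero benefit. I would therefore close the proof by reading the hypothesis $\Delta G'_{0,b,g}=0$ as the (generically satisfied) non-degeneracy statement that no new type-$g$ capacity is built at bus $b$ \emph{and} that building it would be strictly unprofitable; this makes the bracket of~\eqref{eq:KKT_inv} strictly positive, so $\mathbb{E}(\mathcal{U'}_{b,g}^{gen})<\sum_{n\in\mathcal{N}}\zeta_{\delta(n)}\phi_n C^{\text{INV}}_{0,g}=\mathbb{E}(\mathcal{U^*}_{b,g}^{gen})$ and hence $\mathbb{E}(\Delta\mathcal{U}_{b,g}^{gen})>0$. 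Without that caveat the honest conclusion is the weak inequality, the equality case coinciding exactly with the situation already covered by the theorem. A minor bookkeeping point to address along the way is the subcase in which all existing type-$g$ capacity at bus $b$ is retired in the counterfactual so that $G'_{n,b,g}=0$ and the ratio $p'_{n,b,g,t}/G'_{n,b,g}$ is undefined: there $\mathbb{E}(\mathcal{U'}_{b,g}^{gen})$ should be interpreted as the (weakly negative) carrying value of an idle unit, which only strengthens the conclusion.
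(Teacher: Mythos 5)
Your proposal follows the same route as the paper's own proof: use the investment complementarity condition~\eqref{eq:KKT_inv} to pin $\mathbb{E}(\mathcal{U^*}_{b,g}^{gen})$ at exactly $\sum_{n\in\mathcal{N}}\zeta_{\delta(n)}\phi_n C^{\text{INV}}_{0,g}$ when $\Delta G^*_{0,b,g}>0$, and to bound $\mathbb{E}(\mathcal{U'}_{b,g}^{gen})$ by the same quantity when $\Delta G'_{0,b,g}=0$, then subtract via~\eqref{eq:gen_benefits}. The one place you go beyond the paper is the strictness discussion, and your caution is warranted: the paper's proof simply asserts that $\Delta G'_{0,b,g}=0$ implies the strict inequality $\mathbb{E}(\mathcal{U'}_{b,g}^{gen}) < \sum_{n\in\mathcal{N}}\zeta_{\delta(n)}\phi_n C^{\text{INV}}_{0,g}$, whereas complementary slackness with a primal variable at zero only delivers $\le$; the degenerate case in which entry of type $g$ at bus $b$ is exactly break-even in the counterfactual (reduced cost zero, no capacity built) satisfies the corollary's hypotheses but yields zero benefit, coinciding with the situation of Theorem~\ref{theo:gen-nogain}. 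Your explicit non-degeneracy reading of the hypothesis is the honest way to close that gap, and your bookkeeping remark about the subcase where all existing type-$g$ capacity is retired in the counterfactual is likewise a detail the paper leaves implicit. In short: same argument, executed more carefully than the paper's own version.
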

\begin{proof}
    As shown in Theorem~\ref{theo:gen-nogain}, $\Delta G^*_{0,b,g}>0$ implies $\mathbb{E}(\mathcal{U}_{b,g}^{gen}) = \sum_{n \in \mathcal{N}}  \zeta_{\delta (n)}\phi_n C_{0,g}^{\text{INV}}$. $\Delta G'_{0,b,g}=0$ implies $\mathbb{E}(\mathcal{U'}_{b,g}^{gen}) < \sum_{n \in \mathcal{N}}  \zeta_{\delta (n)}\phi_n C_{0,g}^{\text{INV}}$. Therefore, by~\eqref{eq:gen_benefits}, the difference in expected operating profits is positive, i.e., $\mathbb{E}(\Delta \mathcal{U}_{b,g}^{gen}) > 0$.
\end{proof}

\begin{coro} \label{prop:gen-loss}
 Suppose new generation of type $g$ is constructed at bus $b$ in the counterfactual scenario, i.e., $\Delta G'_{0,b,g} > 0$, but not in the expansion scenario, i.e., $\Delta G^*_{0,b,g} = 0$. Then existing generation of that type at that bus suffers losses from the expansion, i.e., $\mathbb{E}(\Delta \mathcal{U}_{b,g}^{gen}) < 0$.
\end{coro}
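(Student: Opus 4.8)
The plan is to mirror the proof of Corollary~\ref{prop:gen-gain}, simply exchanging the roles of the expansion solution and the counterfactual solution. The two ingredients are exactly the facts already extracted in the proof of Theorem~\ref{theo:gen-nogain}: first, that after using complementary slackness on the capacity constraint~\eqref{eq:cep_CA} to replace $p_{n,b,g,t}/G_{n,b,g}$ by $CA_{b,g,t}$, the per-unit operating profit~\eqref{eq:unit_op_profit} equals $\sum_{n\in\mathcal{N}} \zeta_{\delta(n)}\phi_n\big(\sum_{t\in\mathcal{T}} T_t\,\theta_{n,b,g,t} CA_{b,g,t} - C^{\text{FIX}}_g\big)$; and second, that the aggregated stationarity condition~\eqref{eq:KKT_inv} on $\Delta G_{0,b,g}$ bounds $\sum_{n}\zeta_{\delta(n)}\phi_n\sum_{t}T_t CA_{b,g,t}\theta_{n,b,g,t}$ above by $\sum_{n}\zeta_{\delta(n)}\phi_n\big(C^{\text{INV}}_{0,g}+C^{\text{FIX}}_g\big)$, with equality whenever $\Delta G_{0,b,g}>0$.

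Applying these to the counterfactual program~\eqref{eq:counter}: since $\Delta G'_{0,b,g}>0$, complementary slackness forces equality, so $\mathbb{E}(\mathcal{U'}_{b,g}^{gen}) = \sum_{n}\zeta_{\delta(n)}\phi_n C^{\text{INV}}_{0,g}$. Applying them to the expansion program~\eqref{eq:fixed}: since $\Delta G^*_{0,b,g}=0$, only the inequality direction of~\eqref{eq:KKT_inv} is available, giving $\mathbb{E}(\mathcal{U^*}_{b,g}^{gen}) \leq \sum_{n}\zeta_{\delta(n)}\phi_n C^{\text{INV}}_{0,g}$. Subtracting via the definition of benefits~\eqref{eq:gen_benefits} then yields $\mathbb{E}(\Delta\mathcal{U}_{b,g}^{gen}) = \mathbb{E}(\mathcal{U^*}_{b,g}^{gen}) - \mathbb{E}(\mathcal{U'}_{b,g}^{gen}) \leq 0$.

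The main obstacle is upgrading this weak inequality to the strict one stated in the corollary — the very same gap that already appears in Corollary~\ref{prop:gen-gain}, where $\Delta G'_{0,b,g}=0$ is likewise used to assert a strict bound. Equality would require the expansion optimum to be exactly indifferent to installing a marginal unit of type-$g$ capacity at bus $b$ (its aggregated reduced cost in~\eqref{eq:KKT_inv} vanishing even though $\Delta G^*_{0,b,g}=0$), while the counterfactual optimum strictly builds such capacity. I would rule this out as a degenerate configuration: a zero reduced cost at a variable held at its lower bound would admit an alternative optimal expansion solution with $\Delta G_{0,b,g}>0$, placing both regimes on the same side and contradicting the asymmetry in the hypotheses; failing a fully rigorous such argument, I would state the corollary under the same non-degeneracy understanding implicit in Corollary~\ref{prop:gen-gain}. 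With that point granted, nothing new needs to be computed, since every remaining step reuses the substitutions already carried out in Theorem~\ref{theo:gen-nogain}.
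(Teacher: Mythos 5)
Your proof follows exactly the paper's route: the zero-profit equality for the counterfactual program (where $\Delta G'_{0,b,g}>0$ forces the aggregated reduced cost in~\eqref{eq:KKT_inv} to vanish) and the reduced-cost bound for the expansion program (where $\Delta G^*_{0,b,g}=0$ yields only the inequality direction), subtracted via~\eqref{eq:gen_benefits}. The weak-versus-strict gap you flag is genuine, but the paper's own proof simply asserts the strict inequality $\mathbb{E}(\mathcal{U}_{b,g}^{gen}) < \sum_{n}\zeta_{\delta(n)}\phi_n C^{\text{INV}}_{0,g}$ without addressing the degenerate zero-reduced-cost case, so your explicit non-degeneracy caveat is, if anything, more careful than the original.
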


\begin{proof}
    As shown in Theorem~\ref{theo:gen-nogain}, $\Delta G^*_{0,b,g}=0$ implies $\mathbb{E}(\mathcal{U}_{b,g}^{gen}) < \sum_{n \in \mathcal{N}}  \zeta_{\delta (n)}\phi_n C_{0,g}^{\text{INV}}$. $\Delta G'_{0,b,g}>0$ implies $\mathbb{E}(\mathcal{U'}_{b,g}^{gen}) = \sum_{n \in \mathcal{N}}  \zeta_{\delta (n)}\phi_n C_{0,g}^{\text{INV}}$. Therefore, by~\eqref{eq:gen_benefits}, the difference in expected operating profits is negative, i.e., $\mathbb{E}(\Delta \mathcal{U}_{b,g}^{gen}) < 0$.
\end{proof}

At a high level, it can be expected that generators in exporting regions will see benefits from transmission expansion while generators in importing regions will suffer losses. Clear examples of this effect are shown in~\cite{Hogan2018} and~\cite{ONeill2020}, which analyze two-zone systems without subsequent generation investment. The more complex numerical study in this paper largely matches this intuition.


\subsubsection{New generators}
We now turn attention to newly built generation. If these resources would have been built in the model even without the transmission expansion occurring in node~0, then benefits can be defined similarly to existing generators. In this case, Theorem~\ref{theo:gen-nogain} applies and we conclude that the new generation does not benefit from the transmission. If the generation would not otherwise be built, the zero-profit condition on investment in the socially optimal expansion nevertheless holds. Given perfect competition, condition~\eqref{eq:KKT_inv} implies that investment in generation technology $g$ will continue until operating profits fall to the level of annualized investment costs. 

Under an optimization modeling approach, the implication of the zero-profit condition is that new generation cannot be identified as a beneficiary. We note that the assumptions of perfect competition and linear generation investment costs that underpin the zero-profit condition are standard in tools used for expansion planning. Exceptions to this rule may apply, e.g., if there is a constraint on building generation, such that new capacity cannot be built to take full advantage of the new line. In this case, new generators would earn a rent associated with this constraint. Such exceptions are likely to be less important for large lines that would facilitate production across a wider region. Alternatively, it may be argued that the computation of excess profit in Eq.~\eqref{eq:gen_benefits} reflects too narrow a conception of benefits, and the existence of a new generator could itself be considered a benefit regardless of its profitability. In this case, additional assumptions outside the planning tool would be needed to define benefit estimates.

Given the recommendation not to allocate cost for network upgrades to new generators at the time of interconnection but then subsequently allocate cost to them throughout their life, the direct benefits modeling approach supports a significant change to current practice. The overall impact that such a change would have on the cash flows seen by generators over the course of their life is not clear. Suppose a new generator signed an interconnection agreement under a ``connect and manage'' approach, and then welfare-enhancing network upgrades were identified by a planning model. Since the model recommending these projects would assume the presence of the new generator, it would be more likely to recommend upgrades allowing the system to make use of the new generator's energy output. Projects identified to make use of the new generator could very well be the same as those that would be identified under the narrower models used in ``invest and connect'' interconnection procedures. Whereas current practice typically assigns the cost entirely to the new generator without accounting for externalized benefits, however, the proposed approach would allocate cost to other beneficiaries as well. As such, the overall effect would be to bring cost allocation in line with the beneficiaries pay principle.

\subsection{Load}

Benefits to different load zones can be defined much in the same way as benefits to generators. The major difference is that the planning model takes load as exogenous rather than as resulting from an expansion decision that may depend on transmission investment. Consumer surplus at bus $b$ in node $n$ is calculated as the difference between the value of energy consumed and payments made for energy and renewable energy credits. Since we are primarily interested in allocating cost between different zones, each of which comprises a diverse range of customers, it is reasonable to assume a single constant for the value of energy. However, we note that a more detailed representation of price-responsive load for individual customers would enable a more granular calculation of benefits. To simplify notation, we represent load curtailment at bus $b$ at time $t$ by $z_{n,b,t}$ where $z_{n,b,t} = \sum_{i \in \mathcal{I}}z_{n,b,t,i}$. 
Evaluated at node~0, the expected value of consumer surplus can be written as
\begin{equation} \label{eq:consumer}
        \mathbb{E}(\mathcal{U}_b^{load}) = \sum_{n \in \mathcal{N}} \zeta_{\delta (n)} \phi_n \left(\sum_{t \in \mathcal{T}} T_t  \left( \gamma^{\text{LOAD}} - \pi_{n,b,t} - \nu_n RPS_n \right) \left(D_{n,b,t} - z_{n,b,t}\right) \right). 
\end{equation}

As with generation, we compute the benefits from transmission expansion to loads at bus $b$ as the difference in surplus between the socially optimal case and the counterfactual case:
\begin{equation} \label{eq:load_benefit}
    \mathbb{E}(\Delta \mathcal{U}_b^{load}) = \mathbb{E}(\mathcal{U^*}_b^{load})  - \mathbb{E}(\mathcal{U'}_b^{load}).
\end{equation}

As with existing generators, the surplus difference $\mathbb{E}(\Delta \mathcal{U}_b^{load})$ can be positive, zero, or negative for loads.

\subsection{Congestion Rents}

In addition to generator and consumer benefits, a third component of market surplus is transmission congestion rents, computed as the difference between the payments made by load and the revenue received by generators. Under idealized assumptions, the availability of congestion rents could make the problem of cost allocation easier to solve, since congestion revenues would be sufficient to support a socially efficient level of transmission expansion~\citep{Hogan2018,joskow2005merchant}. In this case a ``top-down'' cost allocation would not be required as such, since risk-neutral investors would be willing to build transmission in exchange for the resulting valuable transmission rights. In practice, economies of scale and unpriced reliability constraints mean that congestion rents are well below what would be needed to support an efficient level of investment. For example,~\cite{Sherman2023} estimates that U.S.-wide congestion rents averaged approximately~\$8.2B for 2016\textendash 2021, while~\cite{EIA-AEO} estimates the average cost of transmission in 2022 at \$15/MWh, implying a total annualized cost of roughly~\$63B for the current system. While imprecise, these estimates suggest that congestion rents are an order of magnitude lower than what would be required for investments in transmission to be sustained on a merchant basis.

In principle, rights to congestion rents can be allocated as part of the cost allocation process, either proportional to market participant contributions or by auction. In general, empirical evidence in U.S. markets shows that current markets for financial transmission rights result in large transfers from consumers to financial traders~\citep{Leslie2021,Opgrand2022}, suggesting opportunities for improvements in allocation~\citep{Risanger2024}. In the numerical results for this paper, we compute generator and consumer benefits without adjusting for any assigned transmission rights, noting that future studies assessing the effects of financial transmission rights would likely require downscaling the results of our zonal network model to a more detailed nodal representation.



\subsection{Multi-value Planning}

Inconsistent and non-intuitive cost allocation outcomes in the U.S. context can stem from projects being designated as having a single primary purpose and being evaluated according to the benefits it provides only along that dimension. U.S. systems distinguish between projects undertaken for economics, reliability, public policy, and generator interconnection, while any transmission enhancement necessarily affects outcomes across all four areas~\citep{DeLosa2024}. 
As previously noted, we leave a more complete discussion of public policy interactions for future work. We note here, however, that an advantage of the direct benefits modeling approach is that all categories of benefits can be incorporated in a consistent manner as long as a valid counterfactual can be established. From a modeling perspective, the only requirement for establishing a valid counterfactual is that model~\eqref{eq:counter} must have a feasible solution after transmission expansion decisions have been fixed. Because the model penalizes power shortfalls rather than implementing a hard constraint, and because entry of new generation is not restricted, reliability constraints cannot cause infeasibility. Our implementation of an RPS in Eq.~\eqref{eq:cep_RPS} could in principle lead to infeasibility. In practice, however, most states have implemented Alternative Compliance Payments to limit the potential cost of RPS policies, meaning that a soft constraint would more accurately reflect the public policy. Once a counterfactual has been established, costs associated with reliability and public policy naturally flow into prices for energy and clean attributes, allowing straightforward inclusion in benefit calculations.

\subsection{Cost Allocation} \label{se:cost_allocation}
The analysis thus far leads to the conclusion that existing generation and load can be beneficiaries of transmission expansion over the long term, implying that both existing generation and load should share cost under the ``beneficiaries pay'' principle. In the numerical study we examine two different policies for allocating the cost of transmission investments made at node~0: allocating cost only to load, as in current practice, and allocating across both existing generation and load.    

When allocating cost only to load, the allocation ratio to load at bus $b$ is determined using the following equation:
\begin{align} \label{eq:load_only_allocation}
    r_b^{load}=\frac{[\mathbb{E}(\Delta \mathcal{U}_b^{load})]_{+}}{\sum_{b'\in \mathcal{B}} [\mathbb{E}(\Delta \mathcal{U}_{b'}^{load})]_{+}},
\end{align}
where $[*]_{+}$ denotes $\max\{0,*\}$. 

When allocating cost to both load and existing generation, with $G_{b,g}^0$ representing the quantity of existing capacity of generation $g$ at bus $b$, allocation ratios are determined using the following equations:
\begin{subequations} \label{eq:cost_allocation}
\begin{align} 
    &r_b^{load}=\frac{[\mathbb{E}(\Delta \mathcal{U}_b^{load})]_{+}}{\sum_{b'\in \mathcal{B}} \left([\mathbb{E}(\Delta \mathcal{U}_{b'}^{load})]_{+} + \sum_{g \in \mathcal{G}} [G_{b,g}^0 \mathbb{E}(\Delta \mathcal{U}_{b',g}^{gen})]_{+} \right)}; \label{eq:load_allocation}\\
    &r_{b,g}^{gen}=\frac{[G_{b,g}^0 \mathbb{E}(\Delta \mathcal{U}_{b,g}^{gen})]_{+}}{\sum_{b'\in \mathcal{B}} \left([\mathbb{E}(\Delta \mathcal{U}_{b'}^{load})]_{+} + \sum_{g \in \mathcal{G}} [G_{b,g}^0\mathbb{E}(\Delta \mathcal{U}_{b',g}^{gen})]_{+} \right)}. \label{eq:gen_allocation}  
\end{align}
\end{subequations}


The presence of the max operator ensures that market participants who do not benefit from a transmission investment are not allocated cost. However, it also implies that market participants that are harmed by an expansion project are not compensated as part of cost allocation. It would be straightforward mathematically to define negative allocations, i.e., compensatory payments to these participants. Since the planning model by assumption identifies a surplus-maximizing expansion plan, there would be sufficient surplus in the market to make these compensatory payments. Several recent cases in the U.S. show the potential for states or incumbents hurt by transmission expansion to intervene and prevent it from occurring (see, e.g.,~\cite{Hausman2024}), suggesting that compensatory payments or long-term financial rights that protected incumbents against the effect of transmission expansion could lead to fewer disputes in planning.


\section{Numerical Study} \label{se:example}
This section presents results of a numerical example on a simplified model of the ERCOT system. Building on the discussion of Section~\ref{se:beneficiaries}, we document the different benefits and losses seen by generation and loads in different parts of the system. One major conclusion of the numerical study is that allocating cost on a portfolio basis is likely to be more consistent with the beneficiaries pay principle than allocating on a project-by-project basis. Further, we contrast ex post benefits derived from out-of-sample tests against in-sample estimation, computing the range of possible distributional outcomes from transmission expansion to provide insight into the challenge posed by ambiguity in future scenarios and probabilities.

\subsection{Data and Study Assumptions} \label{se:case}
The study employs an 8-Bus ERCOT DC Test Case introduced by~\cite{battula2020ercot}, with the network shown in Figure~\ref{fig:texas}.
\begin{figure}[!h]
    \centering
    \includegraphics[width = 10cm]{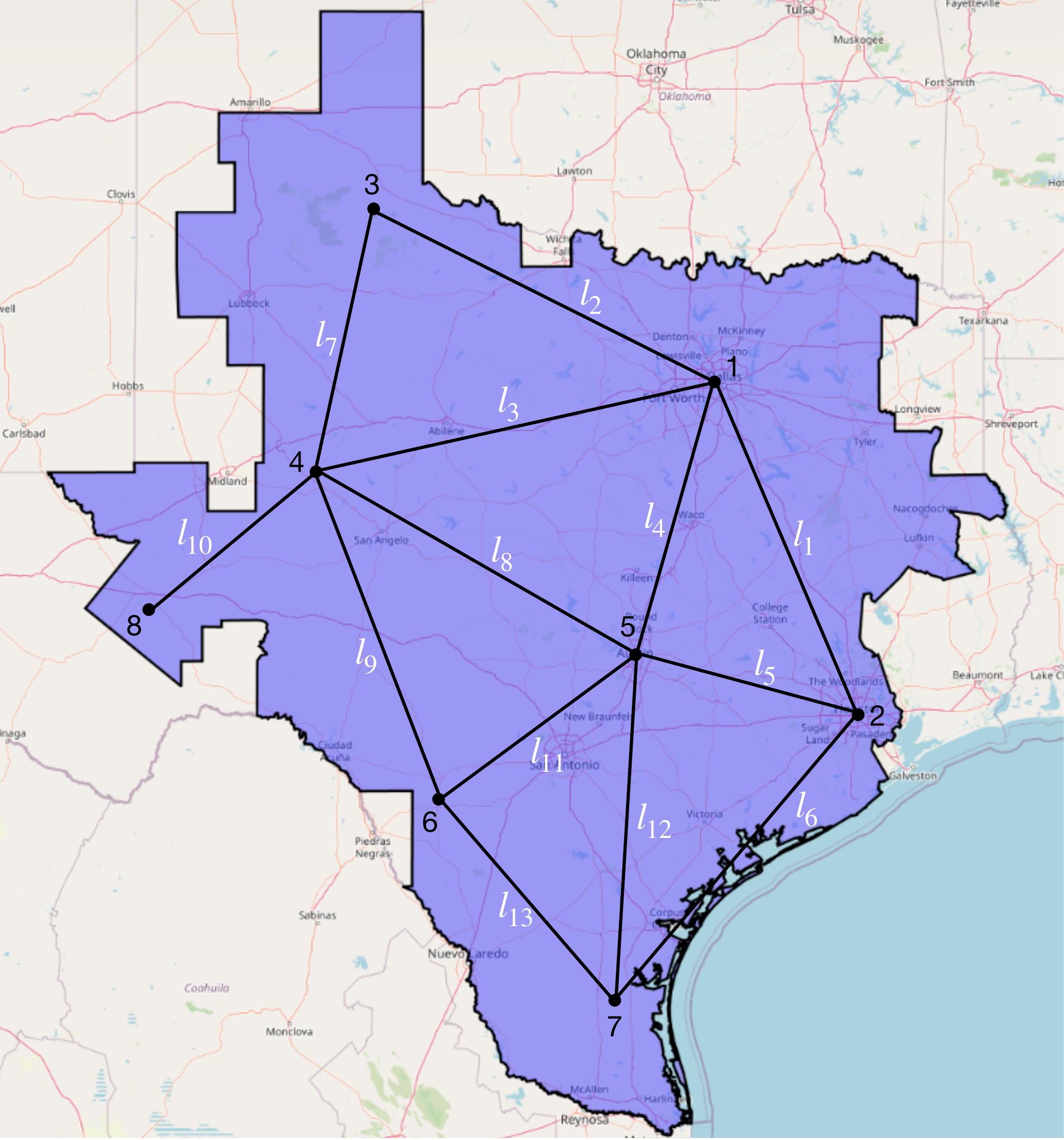}
    \caption{8-Bus ERCOT network.}
    \label{fig:texas}
\end{figure}
The generation technologies considered are natural gas combined cycle (CC), natural gas combustion turbine (CT), coal, nuclear, utility-scale solar, and land-based wind. Costs for these technologies are sourced from the NREL Annual Technology Baseline database~\citep{nrel2022annual}. The existing generation capacity mix is obtained from the ERCOT Capacity, Demand and Reserves (CDR) Report~\citep{ERCOT_CDRreport}. Existing generation capacity, reported in Table~\ref{tab:exist_gen}, is assigned to different buses in the test system in a manner consistent with the ERCOT resource siting methodology report~\citep{ERCOT_LSTA,ERCOT_LSTA_siting}, but should not be expected to match locations precisely.

\begin{table}[] \caption{Existing capacity of each generation type by bus.}
    \label{tab:exist_gen}
    \centering 
    \begin{tabular}{c|cccccccc}
    \hline
Generation type &	$b_1$ &		$b_2$ &		$b_3$ &		$b_4$ &		$b_5$	&	$b_6$	&	$b_7$ &		$b_8$ \\ \hline
Gas CC	&	0	&	6,062	&	12,644	&	1,839	&	0	&	0	&	8,282	&	0 \\
Gas GT	&	0&		10,404&		6,011&		2,570&		0&		0	&	9,343&		0\\
Coal	&	0	&	2,514	&	7,023&		0&		4,031&		0&		0&		0\\
Nuclear	&	2,400& 2,573	&	0	&	0	&	0	&	1,030	&	0	&	0\\
Solar&		0	&	468&		1,073&		0	&	0	&	0&		850&		6,611\\
Wind&	0&	4,865&		1,330&		17,291 &0	&	0	&	2,969	&	0\\ \hline
    \end{tabular}
\end{table}

Hourly load data for the year 2020 from~\cite{ERCOTload} is used to represent the load profiles in the system. For each node $n$, load $D_{n,b,t}$ is obtained by multiplying this profile by a demand growth factor $\beta_{n}$. Hourly solar and wind availability profiles for the year 2020 are extracted from~\cite{ninja} using methods described in~\cite{pfenninger2016long} and~\cite{staffell2016using}. To ensure computational tractability and account for operation costs, a K-means method is employed to cluster the year of data based on the net load, from which $20$ representative days ($480$ hours) with varying weight, i.e., $T_t$, are selected to represent a simulation year.

In the long-term planning model, the uncertainties included are the presence of an RPS, load growth, technology investment costs for wind and solar, and fuel cost. For each uncertainty except the RPS, low, medium, and high values are estimated based on \cite{nrel2022annual, ERCOT_report}. We note that given the high-quality solar and wind resources in Texas, the RPS constraint does not have a significant impact on the numerical results. A future scenario is defined as a subset of the uncertainty space that represents a specific combination of the five uncertainties. Considering a low, medium, and high value for each uncertainty, there will be a total of $3^5=243$ possible future scenarios. To ensure computational tractability for the MIP model, the number of scenarios must be reduced. In this study, seven scenarios with varying probabilities were selected based on the methodology described in~\cite{newlun2021adaptive}. 
Since we wish to avoid making assumptions on underlying scenario probabilities, we do not claim that the transmission plan identified by the model is ``optimal'' as such. Out-of-sample tests show positive net benefits in all scenarios, however, suggesting that the chosen clustering and scenario selection procedures lead to a high-quality solution.

A 20-year planning horizon is simulated with investment decisions made every 5 years, resulting in a tree with depth four and seven scenarios. A discount rate of $7.78\%$ is applied to compute the net present value of the total investment cost and operational cost in the objective function. It is assumed that the operational costs for each successive 5-year interval remain constant. In light of this assumption, the discount factor, denoted as $\zeta_y$ for time index $y \in \{1,2,3,4\}$, is determined through the following formula: $$\zeta_y = \frac{1}{(1+7.78\%)^{5(y-1)}}\left ( 1+\frac{1}{(1+7.78\%)^1}+\frac{1}{(1+7.78\%)^2}+\frac{1}{(1+7.78\%)^3}+\frac{1}{(1+7.78\%)^4}\right).$$ We use the power balance penalty curve shown in Table~\ref{tab:PB} and transmission line violation penalty $\gamma^{\text{LINE}} = 9251$ \$/MW for all lines, congruent with the practices in ERCOT~\cite{ercot_pb}.

\begin{table}[H]
    \centering
    \caption{ERCOT power balance penalty curve.}
    \begin{tabular}{c|ccccccccc}
\hline
    MW violation & $\leq 5$ & $5 \sim 10$  & $10 \sim 20$ & $20 \sim 30$ & $30\sim40$ & $40\sim50$ & $50\sim100$ & $\geq 100$ \\
    $\gamma_i^{\text{PB}}$ (\$/MWh) & $250$ & $300$ & $400$ & $500$ & $1000$ & $2250$ & $4500$ & $5001$\\
    \hline
\end{tabular}
    \label{tab:PB}
\end{table}

This case study assumes that there are seven types of transmission line expansion increments, with the same costs across all scenarios in each stage and for each corridor, as defined in Table~\ref{tab:line_inv_type}. The per unit investment cost in Table~\ref{tab:line_inv_type} exhibits a significant decrease with increasing expansion capacity, reflecting economies of scale. 
 
\begin{table}[H]
    \centering
    \caption{Transmission line capacity increment type and investment cost} \label{tab:line_inv_type}
    \begin{tabular}{c|c|c|c}
    \hline
        Type & Expansion (MW) & Amortized investment cost (\$M/yr) & Per unit cost (\$M/MW) \\
        \hline
        1 & 1400 & 68.93 &0.61\\
        2 & 1800 & 72.64 &0.50\\
        3 & 2300 & 78.34 &0.42\\
        4 & 3000 & 89.59 &0.37\\
        5 & 3600 & 98.79 &0.34\\
        6 & 4200 & 101.70 &0.30\\
        7 & 8000 & 154.96 &0.24\\
    \hline    
    \end{tabular}
\end{table}


The models are implemented in Julia~\citep{Bezanson2017Julia} using JuMP.jl~\citep{DunningHuchetteLubin2017JuMP} and solved with Gurobi version 10.0.1~\citep{gurobi} using a MIP gap (where applicable) of $0.5\%$. The computations are performed on a Mac computer with an Apple M1 Max chip and 10 cores.

\subsection{Expansion Plan}

The transmission line expansions resulting from model~\eqref{eq:cep} for the developed test system are summarized in Table~\ref{tab:add_line_inv}. 
\begin{table}[!ht]
    \centering
    \caption{Invested line capacity (MW) by scenario and by stage. Some scenarios with no line investment are omitted.}
    \label{tab:add_line_inv}
    \begin{tabular}{cc|cccccccccccccc}
    \hline
        Year & Scenario & $l_1$ & $l_2$ & $l_3$ & $l_4$ & $l_5$ & $l_6$ & $l_7$ & $l_8$ & $l_9$ & $l_{10}$ & $l_{11}$ & $l_{12}$ & $l_{13}$ \\ \hline
       2023& -- & 0 & 8000 & 2300 & 0 & 0 & 1800 & 3600 & 0 & 0 & 2300 & 0 & 2300 & 0 \\ 
       2028& 7 & 0 & 0 & 0 & 0 & 0 & 0 & 0 & 3000 & 0 & 0 & 0 & 0 & 0\\
       2033& 1,2,6 & 0 & 0 & 0 & 0 & 0 & 0 & 0 & 3000 & 0 & 0 & 0 & 0 & 0\\
       2033& 3,4 & 0 & 0 & 0 & 0 & 0 & 0 & 0 & 2300 & 0 & 0 & 0 & 0 & 0\\
       2033& 5 & 0 & 0 & 0 & 0 & 0 & 0 & 0 & 3600 & 0 & 0 & 0 & 0 & 0\\
       2038& 3 & 0 & 0 & 1800 & 1800 & 0 & 0 & 0 & 4200 & 4200 & 0 & 0 & 0 & 0\\
       \hline
    \end{tabular}
\end{table}
In the first stage (year $2023$), corresponding to node~$0$ of the tree in Figure~\ref{fig:scenario}, six transmission expansion projects are selected. The largest of these is on path $l_2$, connecting the generation-rich zone $b_3$ with the population center $b_1$.

Table~\ref{tab:add_gen_inv} shows a weighted average of the total generation capacity additions made over the 20-year horizon in the seven modeled scenarios, as well as the same quantities in a counterfactual without any transmission expansion. In either case, the model builds new gas combustion turbines, solar, and wind, with no additions of coal, nuclear, or combined cycle gas in any scenario. 
\begin{table}[!ht]
    \centering
    \caption{Average 20-year generation capacity investment across seven modeled scenarios (MW)}
    \label{tab:add_gen_inv}
    \begin{tabular}{ccc}
    \hline
        Generation Type & Expansion & No Expansion \\ \hline
       Gas CC & 0 & 0 \\ 
       Gas CT & 21,425 & 35,428 \\
       Coal& 0 & 0\\
       Nuclear& 0 & 0\\
       Solar& 9,859 & 14,938 \\
       Wind& 38,382 & 38,433 \\
       \hline
    \end{tabular}
\end{table}
While we expected new transmission to support the deployment of additional wind, the primary effect of expansion in our model was instead to reduce the requirement for gas turbines: the model builds roughly the same amount of wind in the counterfactual case, but substantially more gas turbines. At node~0 of the model, new gas turbines are built at $b_1$ in the expansion scenario, while new gas turbines are built at $b_1$ and $b_8$ in the counterfactual. No wind or solar is built in node~0, potentially due to our assumption that transmission expansions will enter service only in the second year index.

\subsection{Portfolio vs. Project-by-Project Allocation} \label{se:port-proj}

The first policy question we address is whether to assess the benefit of the six projects selected at node~0 on a portfolio or project-by-project basis. In our notation, the question is whether to compute a single instance of the counterfactual model~\eqref{eq:counter} with $\mathcal{W}^{INV}$ including all six projects as a portfolio, or six separate instances of model~\eqref{eq:counter} with a single element each in $\mathcal{W}^{INV}$. For purposes of this subsection, we allocate costs according to Eq.~\eqref{eq:load_only_allocation}, i.e., only to loads and only to those with positive benefits, without compensating those harmed by the expansion. We conclude that assessment at the portfolio level results in a cost allocation more consistent with the beneficiaries pay principle.

Table~\ref{tab:project_allocation} shows the benefits calculated for each project evaluated separately as well as the portfolio, along with an allocation percentage across the~8 buses. The first observation is that, for each individual project except the expansion on $l_6$, the total expected benefit across all buses is negative. As a consequence, the projects would not pass a benefit\textendash cost test when assessed as individual projects, despite being part of a beneficial portfolio of projects. While not guaranteed, each project has at least one bus with positive benefits, allowing a cost allocation to be defined under our formula.
\begin{table}[!ht]
    \centering 
    \caption{Expected nodal benefits and cost allocation ratios when allocating solely to loads. For ``Projects Sum,'' the allocation ratio for each bus is calculated based on the sum of allocated cost across all projects calculated individually. For ``Portfolio,'' the allocation ratio for each bus is calculated from portfolio benefits.}
    \label{tab:project_allocation}
    \begin{tabular}{ll|ccccccccc}
    \hline
        Project &  & $b_1$ & $b_2$ & $b_3$ & $b_4$ & $b_5$ & $b_6$ & $b_7$ & $b_8$  & Sum\\ \hline
        & load ratio (\%) & 33.95 & 28.67 & 1.83 & 1.85 & 15.89 & 0.93 & 8.21 & 8.67 &100.0\\ \hline
        \multirow{2}{*}{$l_2$ 8000 MW} & $\Delta \mathcal{U}_b$ (\$M)  & 4,904 & -1,488 & -1,509 & -171 & -831 & -15 & -2,015  & 411  & -715 \\ 
        & $r_b$ (\%)  & 92.27 & 0.0 & 0.0 & 0.0 & 0.0 & 0.0 & 0.0 & 7.73 & 100.0 \\ \hline
        \multirow{2}{*}{$l_3$ 2300 MW} & $\Delta \mathcal{U}_b$ (\$M)  & 3,668 & -2,279 & -61 & -625 & -2,233 & -104 & -908 & -2,170 & -4,710\\ 
        & $r_b$ (\%)  & 100.0 & 0.0 & 0.0 & 0.0 & 0.0 & 0.0 & 0.0 & 0.0 & 100.0  \\ \hline
        \multirow{2}{*}{$l_6$ 1800 MW} & $\Delta \mathcal{U}_b$ (\$M)  & -67 & 2,600 & -2 & 1 & 53 & -9 & -1,887 & 132 & 822\\ 
        & $r_b$ (\%)  & 0.0 & 93.32 & 0.0 & 0.04 & 0.0 & 0.0 & 0.0 & 1.90  & 100.0 \\\hline
        \multirow{2}{*}{$l_7$ 3600 MW} & $\Delta \mathcal{U}_b$ (\$M)  & -456 & -1,714 & -912 & 304 & 85 & -159 & -1,525 & 795 &-3,583\\ 
        & $r_b$ (\%)  & 0.0 & 0.0 & 0.0 & 25.68 & 7.18 & 0.0  & 0.0 & 67.15 & 100.0 \\ \hline
        \multirow{2}{*}{$l_{10}$ 2300 MW} & $\Delta \mathcal{U}_b$ (\$M)  & 181 & -2,333 & -10 & -149 & -1,439 & -146 & -910 & 2,288 & -2519\\ 
        & $r_b$ (\%) & 7.33 & 0.0 & 0.0 & 0.0 & 0.0 & 0.0 & 0.0 & 92.67 & 100.0  \\ \hline
        \multirow{2}{*}{$l_{12}$ 2300 MW} & $\Delta \mathcal{U}_b$ (\$M)  & 12 & 196 & -5 & 77 & 1,195 & 12 & -1848 & 336 & -25\\
        & $r_b$ (\%)  & 0.66 & 10.72 & 0.0 & 4.21 & 65.37 & 0.65 & 0.0 & 18.38 &100.0 \\ \hline 
        \multirow{2}{*}{Projects Sum} & $\Delta \mathcal{U}_b$ (\$M)  & 8,242 & -5,018 & -2,499 & -563 & -3,170 & -421 & -9,093 & 1,792 & -10,730\\
        & $r_b$ (\%)  & 40.54 & 13.57 & 0.0 & 5.11 & 10.39 & 0.09 & 0.0 & 29.69 &100.0 \\ \hline
        \multirow{2}{*}{Portfolio} & $\Delta \mathcal{U}_b$ (\$M) &6,215 &2,281 & -1,379 & -453 & -120 & 15 & -3,250 & 2,360 &5,669 \\
        & $r_b$ (\%) &57.17 &20.98 &0.0  & 0.0 & 0.0 & 0.14 & 0.0 & 21.71 &100.0\\ \hline
    \end{tabular}
\end{table}

Figure~\ref{fig:benefit_cost_ratio} shows the benefit\textendash cost ratio for the portfolio for each bus under each allocation method, with the benefits used for both subplots taken from the ``Portfolio'' row in Table~\ref{tab:project_allocation}. The result of the project-by-project allocation is that some loads, namely, those at $b_4$ and $b_5$, can be assigned positive cost despite having negative benefits from the overall portfolio. These positive allocations result from the positive benefits found for expansion on $l_6$, $l_7$, and $l_{12}$ when assessed individually and imply that the benefit\textendash cost ratio drops below zero for $b_4$ and $b_5$ in the project-by-project allocation. By contrast, the benefit\textendash cost ratio is consistent across all load zones in the case of the portfolio-level allocation, with negatively impacted zones assigned no cost. 
If subjected to the project-by-project allocation, loads at $b_4$ and $b_5$ would likely object and work to prevent the socially beneficial expansion from occurring, e.g., by denying necessary permits. To avoid this issue, we suggest that it is preferable to allocate costs for a portfolio rather than individual projects. 

\begin{figure}[t!]
\centering
\includegraphics[width=.6\linewidth]{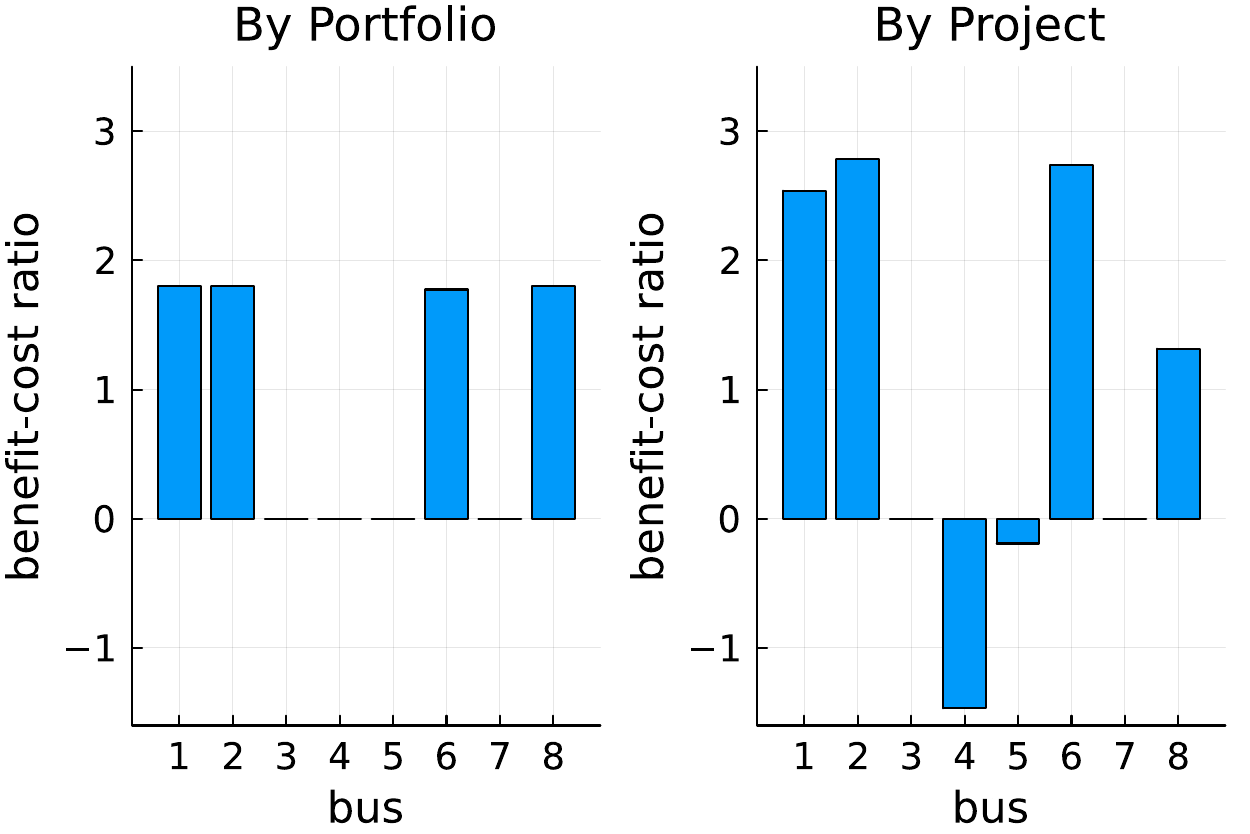}
  \captionof{figure}{Benefit\textendash cost ratio for different loads. While the benefit\textendash cost ratio is consistent across all buses when cost allocation is determined on a portfolio basis, the project-by-project allocation can lead to cost being allocated to loads that do not benefit from the overall portfolio.}
  \label{fig:benefit_cost_ratio}
\end{figure}

\subsection{Generator Impacts}

While the previous subsection considered the impacts on load only, we now turn to impacts on generation. A summary of the aggregate allocation across all generators and loads, calculated with Eqs.~\eqref{eq:load_allocation} and~\eqref{eq:gen_allocation}, is shown in Table~\ref{tab:project_allocation_w_gen}. We note that the aggregate benefits are substantially larger for generators than for loads in our case study, but we cannot make a general claim regarding how benefits are likely to be split in other instances. Consistent with intuition, we observe that the largest line expansion selected by the model, $l_2$, connects the zone in which the largest benefits accrue to generators, $b_3$, with the zone in which the largest benefits accrue to loads, $b_1$.

\begin{table}[!ht]
    \centering 
    \caption{Expected load and generation nodal benefits and cost allocation ratios for transmission expansion portfolio when allocating expansion cost to both load and existing generation. Generation benefit $\Delta \mathcal{U}^{gen}_b$ is the aggregation of positive benefits of all generation at bus $b$. Sum is the sum of positive benefits across buses.}
    \label{tab:project_allocation_w_gen}
        \begin{tabular}{ll|cccccccccc}
    \hline
        Participants && $b_1$ & $b_2$ & $b_3$ & $b_4$ & $b_5$ & $b_6$ & $b_7$ & $b_8$  & Sum\\ \hline
        load
        &$r^{load}_b$ (\%) &12.99 &4.77 &0.0  & 0.0 & 0.0 & 0.03 & 0.0 & 4.93 &22.72\\\hline 
       generation &$r^{gen}_b$ (\%)  & 0.0 & 0.01 & 46.48 & 11.2 & 0.03 & 0.07 & 17.3 & 2.19 & 77.28 \\ \hline
    \end{tabular}
\end{table}

As discussed in Section~\ref{sec:gen-benefits}, generators can also experience significant losses from expansion. The total expected benefits that accrue to existing generation of different types across buses is shown in Figure~\ref{fig:portfolio_gen_benefit}. Whereas the allocation in Table~\ref{tab:project_allocation_w_gen} aggregates only the positive benefits, Figure~\ref{fig:portfolio_gen_benefit} includes the negative impacts. Just as loads at $b_1$ and $b_2$ see the largest benefits from expansion, generators in those zones see the largest losses. The largest generator benefits occur at $b_3$, concentrated in existing thermal generators at that location.

\begin{figure}[t!]
\centering
  \centering
  \includegraphics[width=.85\linewidth]{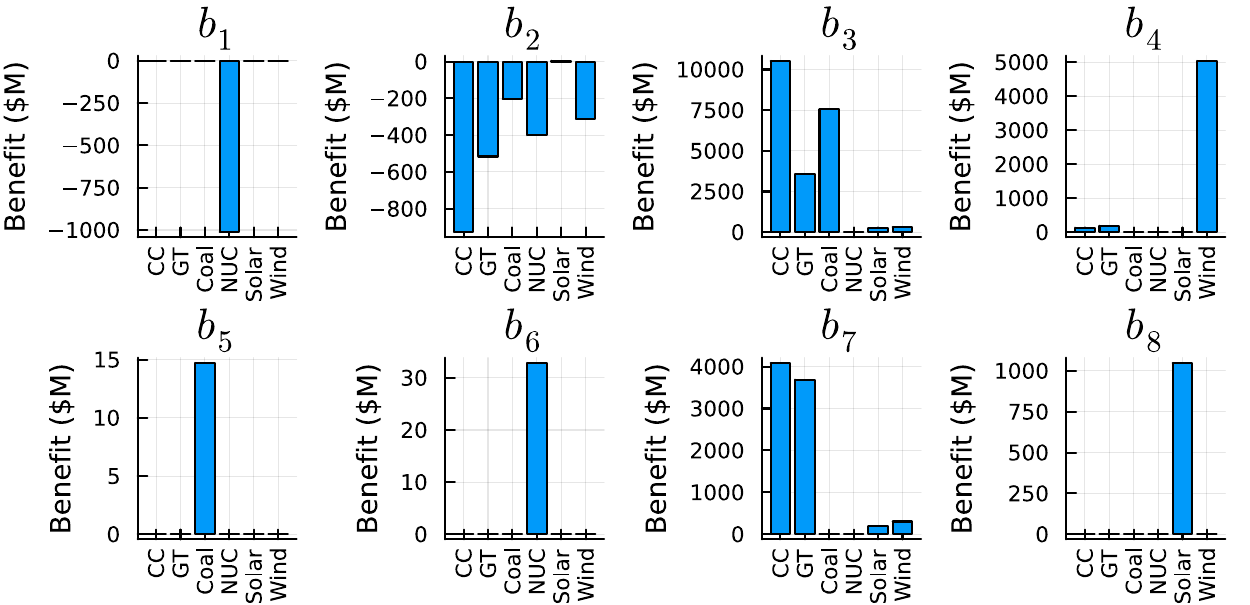}
  \captionof{figure}{Expected benefits of existing generation of different generation technologies across buses derived from portfolio.}
  \label{fig:portfolio_gen_benefit}
\end{figure}

\subsection{In-Sample vs. Out-of-Sample Tests}
The cost allocations defined above are calculated based on in-sample results, i.e., the expected zonal benefits determined as the weighted average across various scenarios where scenarios and scenario probabilities are taken from the planning model covering the whole planning horizon of 20 years. Table~\ref{tab:project_allocation} and~\ref{tab:project_allocation_w_gen} show the in-sample cost allocation ratios under two policies. As discussed above, however, the scenarios and probabilities determined for the planning model do not reflect the full range of possible outcomes or participant beliefs. Accordingly, a key question is the validity of these estimates and the extent to which out-of-sample results might diverge from the in-sample expected value. 

Estimating the benefits out of sample for the whole 20-year horizon is complicated computationally, because it would require definition of a complete policy describing how transmission and generation investments after the first stage will be made based on realizations of uncertainty that are not contained in our original planning model. In our context, such a policy cannot be defined: we rely on a stakeholder process to determine the scenarios and probabilities to be used in our planning model, and cannot fully specify the outcomes of future stakeholder processes. To avoid this issue, we instead perform out-of-sample tests for both cost allocation ratios on a single operating year. Specifically, we perform an out-of-sample analysis for $y=2$, i.e., year 2028, to assess benefits of transmission expansion projects determined in $y=1$, computing the distribution of realized benefits against the ex ante allocation. In out-of-sample tests, we used load and renewable availability data from $2022$, sourced and processed in a manner consistent with the procedures outlined in Section 4.1. Since our out-of-sample tests do not have transmission investment, we employ a linear program covering the entire year of data instead of selecting representative days as in the MIP planning model. Benefits are computed for all $3^5=243$ possible realizations of uncertainty described above. The aggregated generation and load benefits on each bus across different scenarios in year 2028 are shown in Figure~\ref{fig:oos_portfolio_gen_load_benefit}. The scenarios are ordered by the gross social benefits from the transmission expansion. It is noteworthy that while not generalizable, in our case study the expansion is beneficial under all realizations of uncertainty. It can also be observed that, while the rank ordering of zones is relatively stable overall, there are wide swings in the absolute benefits realized in each zone.

\begin{figure}[t!]
\centering
\begin{subfigure}[t]{.49\textwidth}
  \centering
  \includegraphics[width=.95\linewidth]{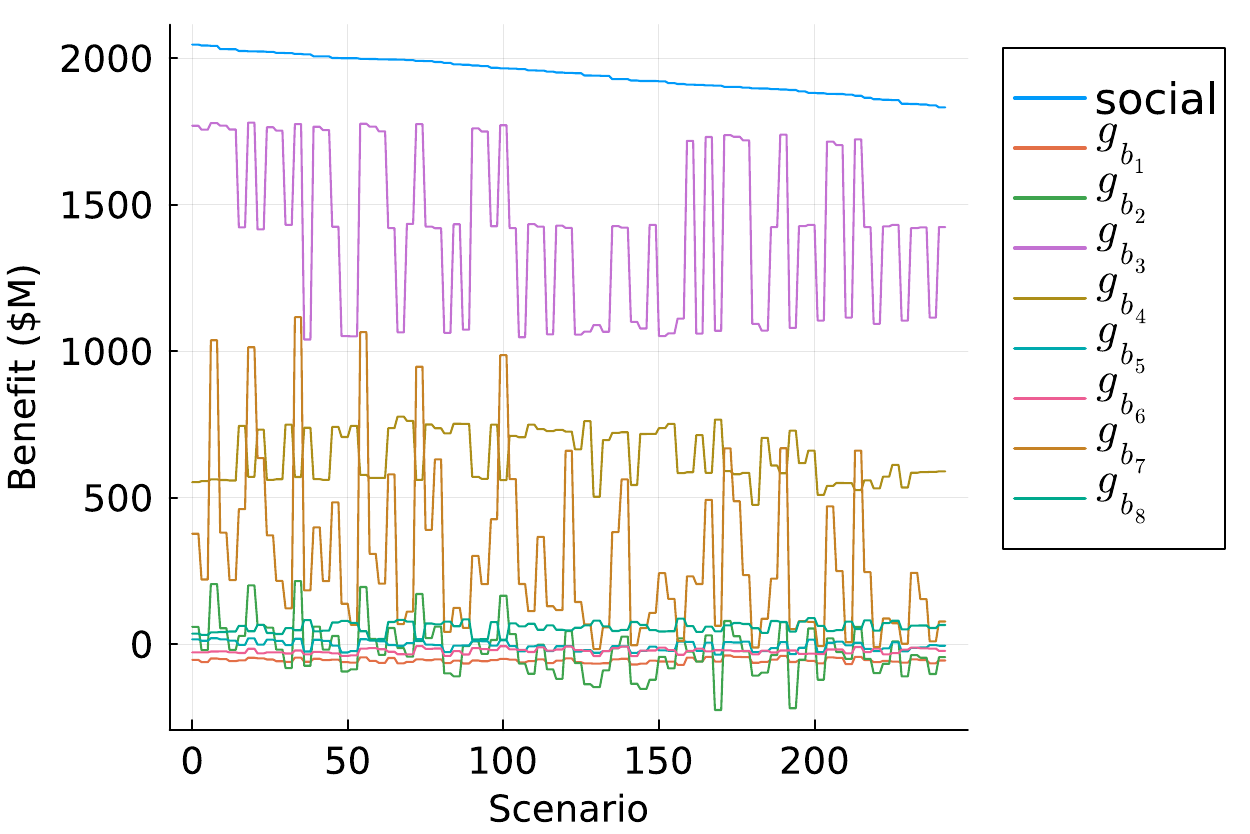}
  \caption{Nodal existing generation benefits.}
  \label{fig:oos_portfolio_gen_benefit}
\end{subfigure}
\begin{subfigure}[t]{.49\textwidth}
  \centering
  \includegraphics[width=.95\linewidth]{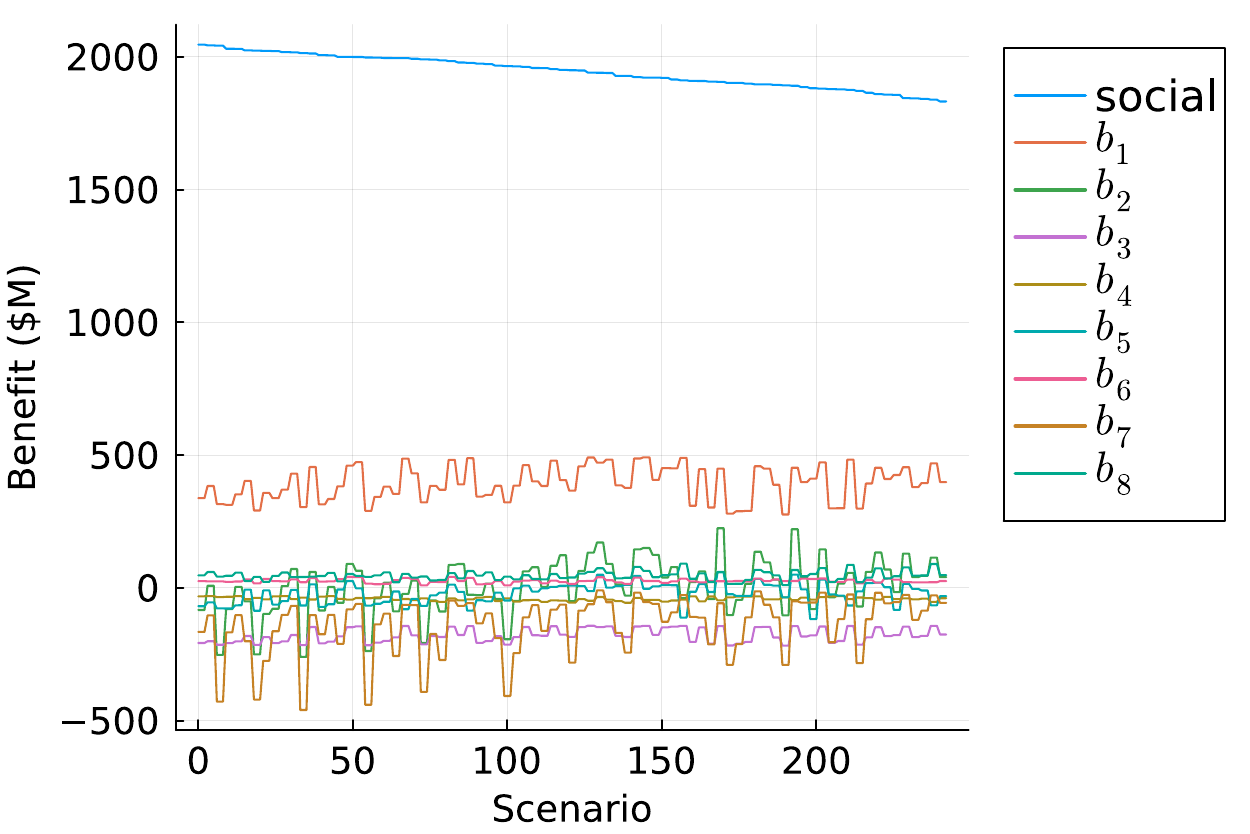}
  \caption{Nodal load benefits.}
  \label{fig:oos_portfolio_load_benefit}
\end{subfigure}
\caption{Benefits of the portfolio on out-of-sample tests in year 2028 ranked by social benefits.}
\label{fig:oos_portfolio_gen_load_benefit}
\end{figure}

The overall distribution of benefits evaluated ex post for generators and loads is shown in Figure~\ref{fig:oos_portfolio_ratio_w_gen}, with generation of all types aggregated at each bus. The blue bars indicate the number of times (out of the 243 scenarios) ex post benefits are calculated to be in each range, while the red dashed line indicates the cost allocation determined ex ante (as in Table~\ref{tab:project_allocation_w_gen}). Here, the consequences of uncertainty are apparent, as the ex post distribution of benefits in some cases does not contain the red dashed line. In absolute terms, the deviation can be quite significant: for example, generators at $b_3$ may see almost 70\% of total benefits from the portfolio after being allocated 46\% of costs. One possible reason for biased estimates is that ex ante benefits are estimated over a longer horizon than those calculated ex post. Even if a less biased ex ante estimate could have been produced with a more targeted computation, however, the significant variance observed in realized benefits would remain.

\begin{figure}[t!]
    \centering 
    \begin{subfigure}[t]{0.49\textwidth}
        \centering
        \includegraphics[height=2.1in]{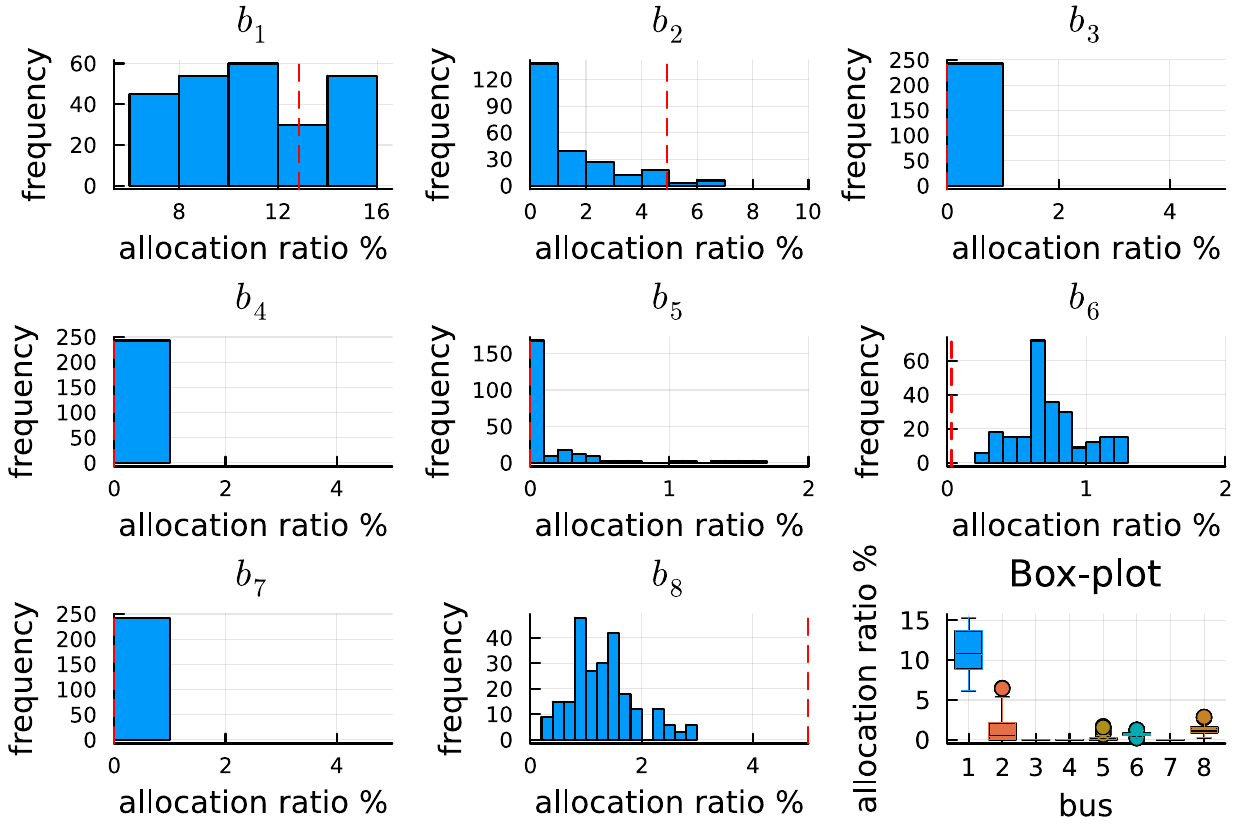}
        \caption{Distributional allocation ratio on load.}
        \label{fig:portfolio_load_ratio}
    \end{subfigure}
    \begin{subfigure}[t]{0.49\textwidth}
        \centering
        \includegraphics[height=2.1in]{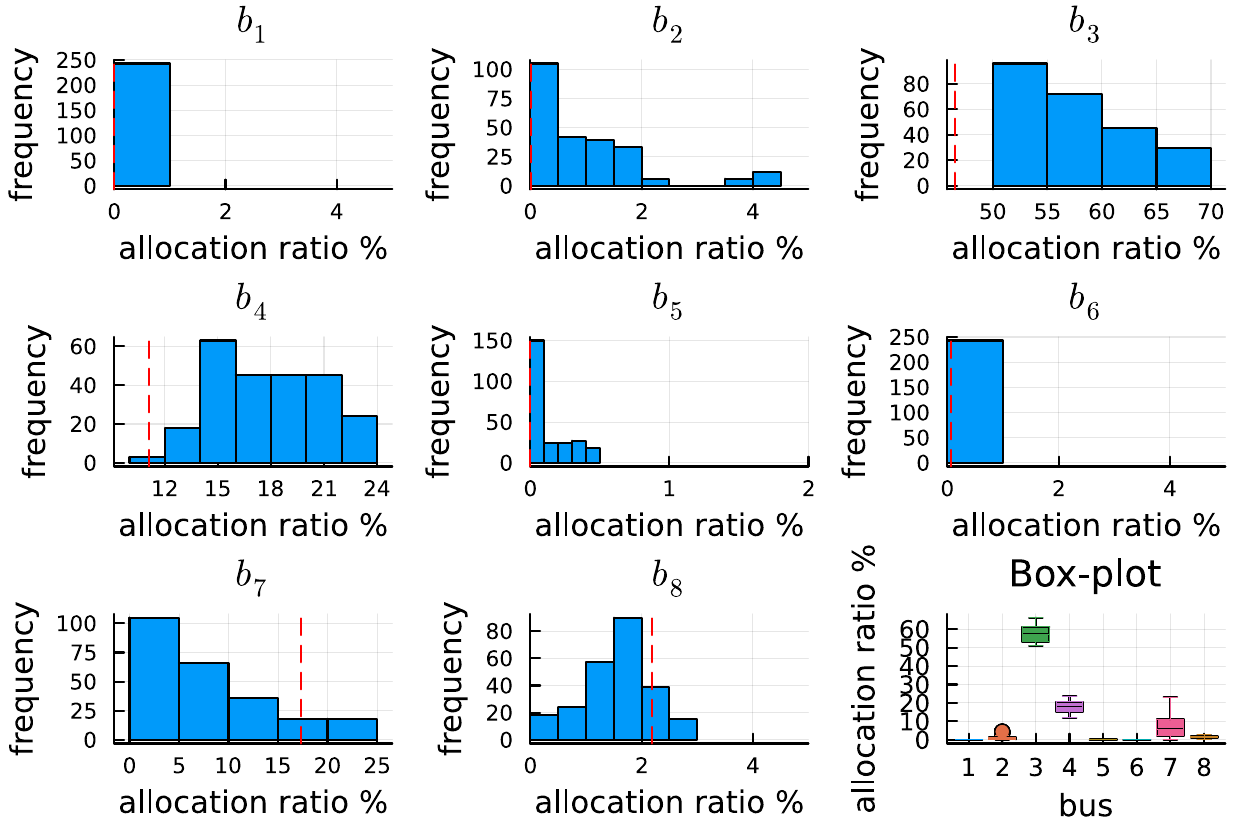}
        \caption{Distributional allocation ratio on generators.}
        \label{fig:portfolio_gen_ratio}
    \end{subfigure}
    \caption{Distributional allocation ratio of the transmission expansion portfolio for all buses on out-of-sample scenarios in year $2028$. Red dashed line is the portfolio allocation ratio in Table~\ref{tab:project_allocation_w_gen}.}
    \label{fig:oos_portfolio_ratio_w_gen}
\end{figure}

To test how the benefits of the portfolio may evolve over the life of the new lines, we construct an additional test using realizations of uncertainty for the year 2038 and assuming that a 3000 MW capacity expansion on $l_8$ has subsequently been added to the system. Referring back to Table~\ref{tab:add_line_inv}, an expansion on $l_8$ is chosen in each of the seven scenarios in the planning model, with the timing and size varying by scenario. With this 3000~MW expansion added to the system, we re-compute the benefits of the original portfolio of six transmission lines. Figure~\ref{fig:oos_portfolio_ratio} shows the distribution of benefits when allocated only to load for the years 2028 and 2038. In year 2038, benefits shift away from loads at $b_1$ to those at $b_2$, $b_5$, and $b_8$. 
\begin{figure}[t!]
    \centering 
    \begin{subfigure}[t]{0.49\textwidth}
        \centering
        \includegraphics[height=2.1in]{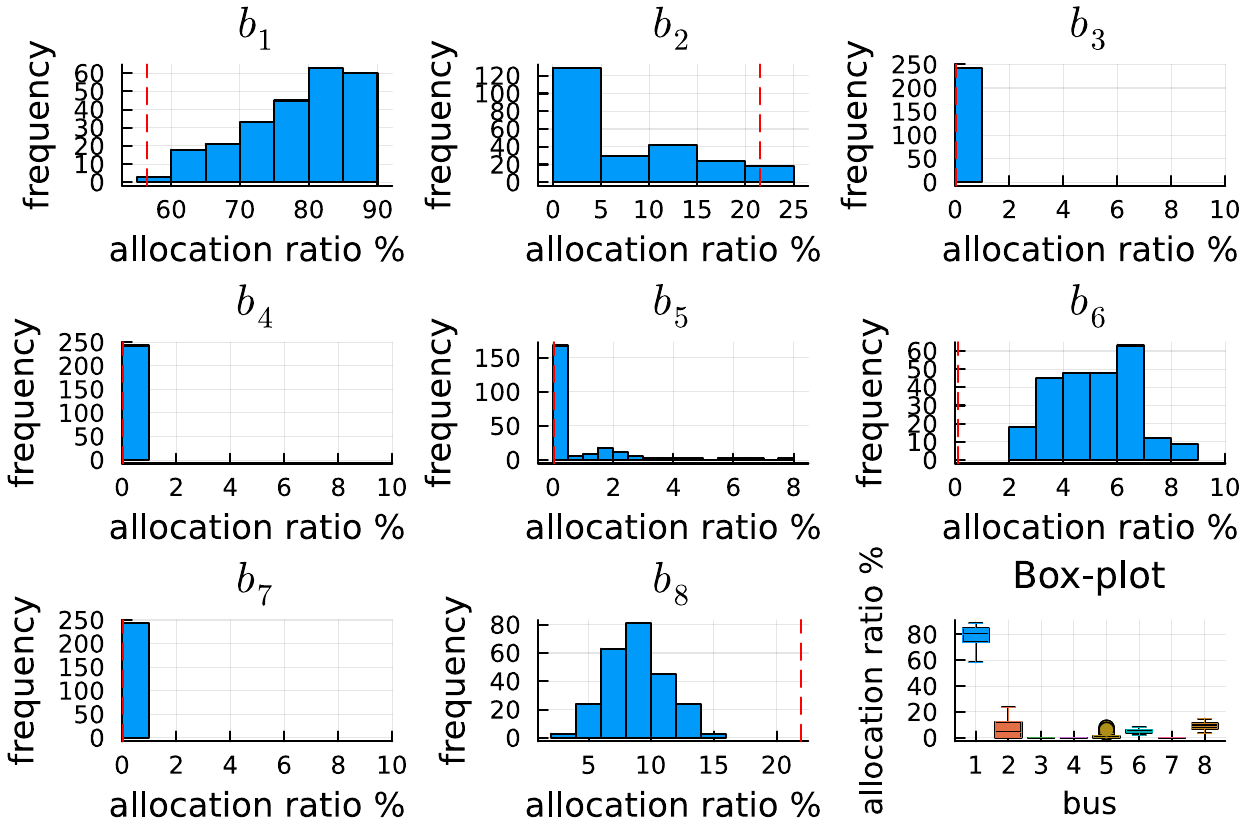}
        \caption{Distributional allocation ratio in year 2028.}
        \label{fig:portfolio_ratio_2028}
    \end{subfigure}
    \begin{subfigure}[t]{0.49\textwidth}
        \centering
        \includegraphics[height=2.1in]{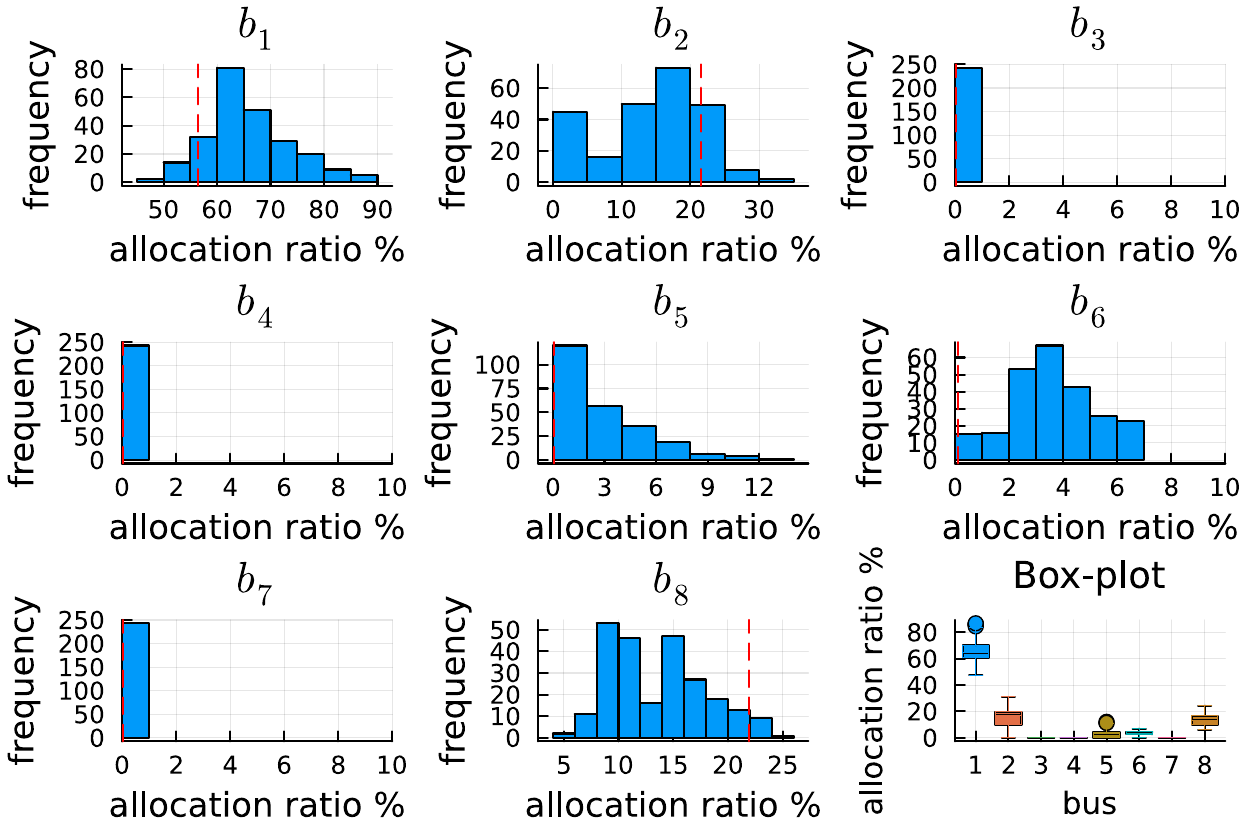}
        \caption{Distributional allocation ratio in year 2038 with an additional 3000 MW expansion on $l_8$.}
        \label{fig:portfolio_ratio_2038}
    \end{subfigure}
    \caption{Distributional allocation ratio of the transmission expansion portfolio for all buses and generation on out-of-sample scenarios. Red dashed line is the portfolio allocation ratio in Table~\ref{tab:project_allocation}.}
    \label{fig:oos_portfolio_ratio}
\end{figure}
The shift exhibited in Figure~\ref{fig:oos_portfolio_ratio} suggests an extension of the argument in Section~\ref{se:port-proj} that a portfolio-level allocation should be preferred to a project-level allocation. Suppose two projects with 50-year expected lives are selected and built in consecutive years. Given that they will coexist in the network for 49 out of their 50 years, their benefits will necessarily be interdependent and could be better assessed jointly. Extending the argument further, estimates of the aggregate benefits of the network may be more accurate than estimates of the benefits provided by any subset of network elements.

Overall, the results confirm the potential for uncertainty to cause challenges in cost allocation given the disagreements that market participants will inevitably have on the probability of future scenarios. In the context of the ``beneficiaries pay'' standard, the distribution of possible outcomes makes it clear that an allocation of costs determined ex ante will not be commensurate with the benefits realized ex post. Economic theory offers a potential resolution to the resulting conflicts in the form of financial contracts issued ex ante that would effectively reallocate cost to the ultimate beneficiaries~\citep{Ferris2022}. Given the complications involved in defining such contracts, we defer the effort to future work.  

\section{Conclusion} \label{se:conclusion}
Given the numerous ways in which motivated parties can intervene to prevent transmission expansion, disputes over cost allocation can hold up investment in regional and interregional projects. Out of fairness and to forestall such interventions, U.S. system planners have sought methods to allocate costs according to the estimated benefits that projects will bring. In a direct benefits modeling approach, planners could in principle solve an optimization model that both established the social benefits of a project and enabled an estimate of benefits at the participant level. However, inadequacies in both the models available and the information used in them can lead to significant disagreements about the fairness of the resulting allocations.

This paper identifies several challenges in the use of models to establish cost allocations. Given the complexity of the modeling task, planners typically use a combination of software tools to evaluate proposed projects. One consequence is that it may be difficult to establish a valid counterfactual against which benefits can be measured and to calculate all categories of benefits that could result from an expansion project. The challenge is even greater when assessing benefits out of sample, since a full calculation would require not only determining the range of scenarios to be tested but also specifying a policy for future expansion decisions given the realization of uncertainty.

Without fully resolving these challenges, the theoretical analysis and numerical study lead to five observations connected to the ``beneficiaries pay'' principle. First, benefit estimates should include some attempt to account for the change in the resource mix that is likely to occur with any change in the network. Second, cost should in general not be allocated to new entrants, but should be allocated to incumbents that benefit from transmission expansion. Third, allocations made on the basis of portfolios of projects are likely to be more defensible than those made on individual projects. Fourth, conflicts might be lessened with greater effort to compensate the losers from socially beneficial transmission expansion. Fifth, conflicts might be lessened with greater effort to address the risk that participant-level benefits will diverge significantly from ex ante allocation decisions.

\section*{Acknowledgements}

This research was supported by the Power Systems Engineering Research Center under Project M-43. The authors would like to thank Jim McCalley, Gustavo Cuello Polo, Ali Jahanbani Ardakani, Richard O'Neill, Abe Silverman, and members of the project advisory team for feedback and discussions. 

 \newcommand{\noop}[1]{}

\end{document}